\definecolor{twins}{HTML}{E6F2FF}
\definecolor{acs}{HTML}{FFF0F5}
\definecolor{german}{HTML}{F0FFF0}
\definecolor{so}{HTML}{FFF5E6}
\definecolor{vlightgray}{gray}{0.85}
\Crefname{algocf}{Algorithm}{Algorithms}
\definecolor{Lavender}{rgb}{0.9, 0.9, 0.98}
\definecolor{Salmon}{rgb}{0.98, 0.8, 0.85}
\newcommand{\paratitle}[1]{
\noindent{\bf #1.}}
\let\oldnl\nl
\newcommand{\nonl}{\renewcommand{\nl}{\let\nl\oldnl}}
\def\HiLi{\leavevmode\rlap{\hbox to \hsize{\color{red!20}\leaders\hrule height .8\baselineskip depth .5ex\hfill}}}
\newtheorem{problem}{Problem}
\newtheorem{definition}{Definition}[section]
\newtheorem{theorem}{Theorem}[section]
\newtheorem{remark}{Remark}[section]
\newtheorem{example}{Example}[section]
\newcommand{\ignore}[1]{}
\newcommand{\cut}[1]{}
\newcommand{\probName}{\text{CaRET}}
\definecolor{deepPurple}{RGB}{120, 0, 180}
\newcommand{\algoNameTuple}{\textcolor{deepPurple}{\textsc{SubCure-tuple}}}
\newcommand{\algoNamePattern}{\textcolor{blue}{\textsc{SubCure-pattern}}}
\newcommand{\sysName}{\textsc{SubCure}}
\newcommand{\topk}{\textsc{Single-Update-SubCure-tuple}}
\newcommand{\scoded}{\textsc{SCODED}}
\newcommand{\ifoutlier}{\textsc{IF}}
\newcommand{\lof}{\textsc{LOF}}
\newcommand{\attrset}{\ensuremath{\mathbb{A}}}
\newcommand{\dom}{{\tt dom}}
\newcommand{\db}{\ensuremath{D}}
\newcommand{\pattern}{\ensuremath{\mathcal{\psi}}}
\newcommand{\sat}{\textsc{SAT}}
\newcommand{\bbN}{\mathbb{N}}
\definecolor{moonstoneblue}{rgb}{0.45, 0.66, 0.76}
\definecolor{oldlace}{rgb}{0.99, 0.96, 0.9}
\definecolor{mintcream}{rgb}{0.96, 1.0, 0.98}
\definecolor{mintgreen}{rgb}{0.6, 1.0, 0.6}
\definecolor{mistyrose}{rgb}{1.0, 0.89, 0.88}
\definecolor{palegold}{rgb}{0.9, 0.75, 0.54}
\definecolor{palechestnut}{rgb}{0.87, 0.68, 0.69}
\definecolor{darkgreen}{RGB}{0,100,0} 
\newcommand{\reva}[1]{{\leavevmode\color{black}{#1}}}
\newcommand{\revb}[1]{{\leavevmode\color{black}{#1}}}
\newcommand{\revc}[1]{{\leavevmode\color{black}{#1}}}
\def\HiLiG{\leavevmode\rlap{\hbox to \hsize{\color{green!30}\leaders\hrule height .8\baselineskip depth .5ex\hfill}}}
\def\HiLiY{\leavevmode\rlap{\hbox to \hsize{\color{yellow!50}\leaders\hrule height .8\baselineskip depth .5ex\hfill}}}
\definecolor{light-gray}{gray}{0.95}
\begin{document}

\pagenumbering{gobble}  


\clearpage
\pagenumbering{arabic}  
\setcounter{page}{1}

\title{Stress-Testing Causal Claims via Cardinality Repairs}


\author{Yarden Gabbay}
\authornote{These authors contributed equally to this work.}
\email{yardengabbay@campus.technion.ac.il}
\affiliation{
  \institution{Technion}
  \country{Israel}
}
\author{Haoquan Guan$^*$}
\email{h3guan@ucsd.edu}
\affiliation{
  \institution{University of California, San Diego}
    \country{USA}
}
\author{Shaull Almagor}
\email{shaull@technion.ac.il}
\affiliation{
  \institution{Technion}
    \country{Israel}
}

\author{El Kindi Rezig}
\email{elkindi.rezig@utah.edu}
\affiliation{
  \institution{University of Utah}
    \country{USA}
}

\author{Brit Youngmann}
\email{brity@technion.ac.il}
\affiliation{
  \institution{Technion}
    \country{Israel}
}

\author{Babak Salimi}
\email{bsalimi@ucsd.edu}
\affiliation{
  \institution{University of California, San Diego}
    \country{USA}
}

\begin{abstract}
Causal analyses derived from observational data underpin high-stakes decisions in domains such as healthcare, public policy, and economics. Yet such conclusions can be surprisingly fragile: even minor data errors - duplicate records, or entry mistakes - may drastically alter causal relationships. This raises a fundamental question: \emph{how robust is a causal claim to small, targeted modifications in the data?} Addressing this question is essential for ensuring the reliability, interpretability, and reproducibility of empirical findings.

We introduce \sysName, a framework for \emph{robustness auditing via cardinality repairs}. Given a causal query and a user-specified target range for the estimated effect, \sysName\ identifies a small set of tuples or subpopulations whose removal shifts the estimate into the desired range. This process not only quantifies the sensitivity of causal conclusions but also pinpoints the specific regions of the data that drive those conclusions. We formalize this problem under both tuple- and pattern-level deletion settings and show both are NP-complete. To scale to large datasets, we develop efficient algorithms that incorporate machine unlearning techniques to incrementally update causal estimates without retraining from scratch.

We evaluate \sysName\ across four real-world datasets covering diverse application domains. In each case, it uncovers compact, high-impact subsets whose removal significantly shifts the causal conclusions—revealing vulnerabilities that traditional methods fail to detect. Our results demonstrate that cardinality repair is a powerful and general-purpose tool for stress-testing causal analyses and guarding against misleading claims rooted in ordinary data imperfections.
\end{abstract}




\maketitle




\section{Introduction}

Causal inference now underpins decisions in medicine, policy, and economics while powering fairness auditing and data-debiasing \citep{salimi2019interventional,zhu2023consistent}, explainability \citep{galhotra2021explaining,miller2019explanation}, domain-robust learning \citep{magliacane2018domain,scholkopf2021toward}, and core data-management tasks such as query explanation, discovery, and cleaning \citep{salimi2018bias,galhotra2023metam,pirhadi2024otclean,markakis2024logs,youngmann2024summarized,youngmann2023explaining}. Yet a growing body of work shows that seemingly routine data imperfections - such as misclassification, measurement noise, duplicate records, and composite corruption - can distort causal estimates and invalidate hypothesis tests \citep{Nab2020,MilesValeriCoull2024,Bogaert2025,SarracinoMikucka2017,LockElAnsari2025,AgarwalSingh2024,KadlecSainaniNimphius2023}. If left unaddressed, such distortions can lead to ineffective medical interventions, misallocated resources, or flawed algorithmic decisions, undermining the reliability of empirical findings and their downstream applications. These observations highlight a critical question: \textbf{How robust are our causal claims to common, low-level data errors?}

In this work, we ask a concrete version of the robustness question: given an observational dataset, a treatment variable \( T \), an outcome variable \( O \), and a causal estimation procedure (e.g., the Average Treatment Effect (ATE), defined as the expected difference in outcomes between treated and untreated groups), how many data records must be removed to shift the estimated effect into a user-specified target range? This question formalizes a data-centric notion of robustness-sensitivity not to modeling assumptions, but to minimal, targeted perturbations of the data itself. We refer to this as the problem of \emph{cardinality repair for causal effect targeting} (abbreviated as \probName\ for \underline{Ca}rdinality \underline{R}epair for causal \underline{E}ffect \underline{T}argeting): identifying the smallest subset of tuples (or subpopulations) whose removal pushes the estimate across a specified threshold. Beyond quantifying fragility, such repairs help reveal which records or regions of the data are most responsible for driving the causal conclusion.

We present \sysName\ (\underline{SUB}set \underline{C}a\underline{U}sal \underline{RE}pair), a framework that helps analysts assess how strongly their causal conclusions depend on specific parts of the data. \sysName\ supports two modes: \emph{tuple-level} repair, which identifies individual records whose removal shifts the estimated effect, and \emph{pattern-level} repair, which targets compact subpopulations defined by attribute–value predicates. It surfaces minimal edits that change the conclusion and highlights the regions of the data most responsible for a given estimate, enabling users to explore robustness interactively and with minimal assumptions. We now illustrate this workflow through two motivating examples, showing how small, targeted deletions can flip or fortify real-world causal conclusions.

\begin{example}
\label{ex:twins}
\textbf{Twins mortality.}  The \emph{Twins} dataset~\cite{louizos2017causal} records birth details for same-sex twins; the \emph{treatment} is being the heavier twin, and the \emph{outcome} is first-year mortality.  
Alex, a social scientist, adjusts for gestational age, birth weight, prenatal care, and maternal status and obtains an ATE of $-0.016$—suggesting a small protective effect for the heavier twin. \emph{Tuple-level insight:}  
Running \sysName\ in tuple mode, Alex discovers that removing just \textbf{270 records} (1.1\% of the sample) nudges the ATE to $-0.0009$—a \textbf{94.4\% move toward zero}.  These records are not statistical outliers under common heuristics; they appear representative on observed covariates yet collectively exert strong influence on the estimate. A quick inspection hints that many belong to extremely low-gestational-age births, where overall mortality is uniformly high—an avenue for further clinical review. \emph{Pattern-level insight:}  
Switching to pattern mode, \sysName\ highlights a subpopulation of \textbf{4\,401 first-born twins} weighing between \textbf{1786\,g and 1999\,g} (18.3\% of the data).  Deleting this slice flips the ATE to $0.0003$—a \textbf{101.8\% swing} that reverses the sign.  This weight band sits just below the 2 kg clinical low-birth-weight threshold; delivery complications concentrated in this range may offset any advantage of being the heavier twin, illustrating how a narrow but medically meaningful subgroup can dictate the overall conclusion. Together, these findings show how \sysName\ quantifies fragility and surfaces domain-specific hypotheses—here, extreme prematurity and a critical birth-weight band—that warrant deeper investigation before drawing policy or clinical inferences.
\end{example}


\begin{example}
\label{ex:acs}
\textbf{Disability and wages.}  
The \emph{American Community Survey} (ACS)~\cite{ACS_Data} is a nationwide instrument conducted by the U.S.\ Census Bureau. Alex studies the causal effect of \emph{not} having a disability on annual wages, controlling for education, public-health coverage, gender, and age.  The estimated ATE is \$8{,}774, indicating that, on average, people without disabilities earn roughly nine thousand dollars more per year.
\emph{Tuple-level insight:}  
Running \sysName\ in tuple mode, Alex learns that deleting just \textbf{8{,}400 records} (0.7\% of the data) lifts the ATE to \$11{,}512—a \textbf{31.1\% increase}. A quick look shows many of the removed records cluster in low-wage service occupations with high disability rates; trimming them disproportionately amplifies high-income, non-disabled earners, inflating the gap.
\emph{Pattern-level insight:} 
Shifting the estimate into the target range requires removing a subpopulation comprising 11.9\% of the data. In contrast, to achieve a comparable \textbf{downward} shift, \sysName\ must remove nearly 74\% of the data. This asymmetry signals that the observed wage advantage is highly sensitive to a small slice of low-income disabled workers, yet remarkably stable in the opposite direction. Such asymmetry raises fairness questions: the gap may be driven by a narrow pocket of the labor market rather than a uniform disadvantage. Together, these findings illustrate how \sysName\ pinpoints imbalance in representation—here, a small cohort of low-income disabled respondents—helping analysts decide whether additional weighting, stratification, or data collection is needed before drawing policy conclusions.
\end{example}

Our approach builds on the long-standing tradition of \emph{sensitivity analysis}, which asks how causal conclusions shift under model-based threats such as hidden confounding, selection bias, or measurement error~\citep{rosenbaum2002observational,vanderweele2017evalue,cinelli2020making,fjeldstad2021simex,blackwell2014selection}. Classical methods typically rely on \emph{controlled bias parameters}, or an additive-noise model to capture measurement error—while assuming the dataset itself is clean and internally coherent. 
\reva{Several works have proposed robustness metrics to evaluate how stable causal conclusions remain under data perturbations or removal, to overturn an inference~\cite{frank2013would}, quantifying worst-case estimate changes under data deletions ~\cite{broderick2020automatic}, and estimating the worst-case ATE across subpopulations under confounder shifts~\cite{jeong2020robust}}. \reva{Our approach quantifies ATE sensitivity by directly examining how causal conclusions shift when actual tuples are removed. This is particularly relevant for real-world datasets, which often contain messy and heterogeneous issues, such as duplicate records, miscoded or missing values, format inconsistencies, or corrupted subsets, that previous works do not capture. Our cardinality-repair strategy addresses this gap by treating the data as mutable and asking \emph{how many records - or which subpopulations - must be removed to move an effect into a user-chosen range?} The resulting repair size provides a model-agnostic, data-centric, and intuitive robustness metric that complements and guides classical sensitivity analyses.}



Our method also relates to prior work on constraint-based data cleaning. Integrity-oriented approaches enforce functional or conditional functional dependencies via tuple-level repairs~\citep{bohannon2006conditional,kolahi2009approximating,livshits2022shapley,livshits2020computing,horizon,DBLP:conf/icdt/CarmeliGKLT21,geerts2013llunatic}. More recent systems repair violations of conditional independence between treatment and outcome—typically to force a \emph{null} effect (\(\text{ATE}=0\))~\citep{pirhadi2024otclean,wu2013scorpion,salimi2019interventional}. \sysName\ subsumes these cases yet goes further on two fronts: it supports \emph{arbitrary} effect targets (not just zero), and it pinpoints \emph{influential subpopulations}, rather than only individual rule-breaking tuples~\citep{yan2020scoded,pirhadi2024otclean}. This broader scope lets analysts reason simultaneously about constraint correctness, structural bias, and the overall robustness of causal claims.

\setlist[itemize]{leftmargin=0pt,nosep} 

Our main contributions are as follows.\\
$\bullet$ \textbf{Problem formulation and hardness results  
(Section~\ref{sec:problem}).}  
We formalize the \emph{cardinality repair for causal effect targeting} (\probName) problem:  
Given a treatment, an outcome, and a desired effect interval, find the smallest data subset whose removal shifts the estimated effect into the range.  
We consider two cost models: (i) tuple-level deletions, and (ii) pattern-level deletions that remove entire subpopulations defined by attribute–value predicates.  
We prove that both variants are NP-complete. For tuple-level, this is shown by reduction from \textsc{Subset-Sum}. For pattern-level, we show that this problem is harder than tuple-level, regardless of the causal effect function. 




\noindent
$\bullet$ \textbf{Fast, incremental search algorithms
(Sections~\ref{sec:algo}–\ref{sec:optimizations}).}%
We develop fast, scalable cardinality-repair algorithms using two complementary search strategies.  
In tuple mode, we cluster the dataset using a two-stage \(k\)-means procedure to sample a proxy set, estimate each sampled tuple’s marginal influence, and iteratively delete the one with the highest impact. Influence scores are batch-refreshed every few steps, avoiding unnecessary rescoring and reducing runtime.  
In pattern mode, we perform bottom-up random walks over conjunctions of attribute–value predicates. A dynamic weighting mechanism steers the walk toward impactful predicates, and exploration stops early if the candidate subgroup grows too large. These strategies yield an efficient anytime search that surfaces high-leverage subpopulations without exhaustive enumeration.

Both strategies require re-evaluating the causal effect after each candidate deletion. To make this feasible, we use two standard causal effect estimators—linear regression~\cite{ding2018causal} and inverse-propensity weighting (IPW)~\cite{rosenbaum1983central}—and develop incremental update optimizations that eliminate the need to recompute from scratch after each change. For linear regression, we cache sufficient statistics such as the covariance matrix and response cross-product, and apply low-rank updates when rows are removed. This lets us recompute the treatment effect without rebuilding the full model, and in time that depends only on the number of confounders, not the size of the dataset. For IPW, we warm-start from the previous logistic regression fit and apply a single Fisher-scoring step that adjusts only for the removed rows. This avoids full optimization over all tuples and yields updated propensity scores in constant time per deletion. These incremental estimators ensure that each search step requires only localized computation, preserving estimator fidelity while enabling \sysName\ to scale to large, high-dimensional datasets with interactive latency.

\noindent
$\bullet$ \textbf{Comprehensive empirical evaluation  
(Section~\ref{sec:exp}).}  
We evaluate \sysName\ on four public datasets and one synthetic benchmark, \reva{comparing against eight baseline methods.}  
\sysName\ (i) consistently identifies smaller subsets of influential data,  
(ii) scales to million-row, high-dimensional datasets, and  
(iii) achieves up to order-of-magnitude speed-ups through incremental updates without compromising accuracy.  
Case studies demonstrate that \reva{\sysName\ complements existing sensitivity analysis methods by offering additional insights into ATE sensitivity}, while its identified subsets remain influential across multiple ATE estimators, highlighting the robustness of our approach.

\section{Related work}
\label{sec:related}





\noindent
\textbf{Constraint-based data cleaning}
Traditional data cleaning research focuses on enforcing integrity constraints~\cite{ilyas2019data,5767833}, such as functional dependencies~\cite{bohannon2006conditional,kolahi2009approximating,livshits2022shapley,livshits2020computing,horizon,DBLP:conf/icdt/CarmeliGKLT21}, conditional functional dependencies~\cite{bohannon2006conditional,geerts2013llunatic}, denial constraints~\cite{chomicki2005minimal,chu2013holistic}, and inclusion dependencies~\cite{bohannon2005cost}. Common repair operations include tuple deletion (cardinality repair~\cite{afrati2009repair,miao2020computation}), insertion~\cite{salimi2019interventional}, and value updates~\cite{pirhadi2024otclean}. 
\revb{
Recent work \cite{yan2020scoded, salimi2019interventional, pirhadi2024otclean} has explored enforcing statistical constraints that represent conditional dependence or independence between attributes. Capuchin \cite{salimi2019interventional} computes optimal repairs by adding or removing tuples to enforce conditional independence (CI) constraints. The authors of~\cite{ahuja2021conditionally} use GANs to generate data that satisfies CI constraints, with a focus on learning generative models rather than repairing the data.}
\revb{The works most closely related to ours are OTClean~\cite{pirhadi2024otclean} and SCODED~\cite{yan2020scoded}, both of which enforce statistical constraints representing conditional dependence or independence. OTClean employs optimal transport to probabilistically adjust data values to satisfy CI constraints; since it modifies values rather than removing tuples, its results are not directly comparable to ours. SCODED focuses on identifying the most influential tuples that violate a CI constraint. Its goal is to provide explanations rather than repairs, and thus it reports the top-$k$ most influential tuples instead of minimizing the subset to remove.
In contrast, \sysName not only enforces CIs (by driving the ATE to zero), but also supports targeted causal repair, shifting the ATE into a user-defined range. Beyond identifying influential tuples, \sysName\ is also capable of uncovering meaningful subpopulations whose removal has a significant effect on the ATE.}

\noindent
\textbf{Sensitivity analysis}: A large body of work has developed techniques to assess how causal conclusions might shift in the presence of various threats~\cite{robins2000sensitivity,diaz2013sensitivity}. For instance, \cite{rosenbaum2002observational} quantifies the influence of unmeasured confounding in observational studies. \cite{blackwell2014selection,vanderweele2017evalue} extended this perspective to selection bias, presenting sensitivity analysis tools for evaluating how causal estimates would change under varying assumptions about the selection process.  \cite{cinelli2020making} developed a framework for omitted variable bias, enabling interpretable and flexible assessments of robustness to hidden confounders. \cite{fjeldstad2021simex} addressed measurement error through the simulation extrapolation method, which corrects causal estimates by modeling the effect of added noise and extrapolating to the noise-free case.

\reva{
Several works have proposed robustness measures to assess the stability of causal inferences under data removal or perturbations.
KonFound \cite{frank2013would} is a robustness measure that quantifies the degree of hidden bias needed to overturn an inference. It assesses sensitivity to unobserved confounders through hypothetical value perturbations, identifying the “switch point” where an effect becomes null.
ZamInfluence \cite{broderick2020automatic} is a general-purpose robustness measure that measures how much an estimate can change when a fraction of the data is removed, treating the estimator as a black box. It searches for worst-case deletions that either flip the sign of the estimate or alter its statistical significance, making it applicable across diverse statistical models.
WTE \cite{jeong2020robust} is a robustness measure for confounders shift that estimates the worst-case ATE across subpopulations of a given size, providing conservative guarantees of external validity under distributional changes.
Our approach offers a complementary perspective. Assuming confounders are observed, we examine ATE sensitivity by removing actual tuples or subpopulations. Rather than identifying a single “switch point”, our method explores how the ATE responds to data removal and how it can be shifted toward any user-specified target range, offering a data-centric robustness measure that can support and inform traditional sensitivity analyses.}


\noindent
\textbf{Machine unlearning}: Machine unlearning focuses on efficiently removing the influence of data points from trained models without full retraining~\cite{ginart2019making,guo2020certified,wu2020deltagrad}. This is particularly relevant in settings requiring compliance with data privacy regulations (``the right to be forgotten''~\cite{rosen2011right}). Prior work has proposed exact and approximate unlearning techniques across various models, including regression models~\cite{guo2020certified,izzo2021approximate,wu2020priu}, and deep neural networks~\cite{nguyen2022survey,golatkar2020eternal}. In this work, we leverage unlearning techniques to accelerate causal effect estimation after data subset removal, enabling efficient exploration of data subsets that influence the ATE.

 \section{Background on Causal Inference}
\label{sec:prelim} 






\paratitle{Causal inference and ATE}
\revc{We use Pearl's model for {\em observational causal analysis} \cite{pearl2009causal}. Causal inference aims to quantify the effect of a \textit{treatment variable} ($T$) on an \textit{outcome variable} ($O$). For instance, to estimate the impact of disability on annual wage. The gold standard for establishing causality is through \textit{randomized controlled trials}. In such experiments, a population is randomly divided into two groups: the \textbf{treated group}, which receives the intervention ($\text{do}(T = 1)$ for a binary treatment), and the \textbf{control group}, which does not ($\text{do}(T = 0)$). This random assignment ensures that, on average, all other factors are balanced between the groups, isolating the effect of the treatment.
A widely used metric for estimating causal effects is the \textbf{Average Treatment Effect (ATE)}, which measures the difference between the average outcome observed in the treated and in the control group~\cite{rubin2005causal, pearl2009causal}.
\begin{equation}
    {\small ATE(T,O) = \mathbb{E}[Y \mid \text{do}(T=1)] -  
    \mathbb{E}[Y \mid \text{do}(T=0)]}
\label{eq:ate}
\end{equation}
The do-operator $\text{do}(T=1)$ formalizes the idea of manipulating the treatment, distinguishing between mere observation where $T=1$ and actively setting $T=1$ to analyze its causal impact.}


\revc{Randomized controlled experiments are often impractical or unethical in real-world scenarios. For instance, we cannot randomly assign disabilities to individuals to study its effect on wage. In such cases, \emph{observational causal analysis} provides a viable alternative, allowing for sound causal inference, albeit under additional assumptions.
The primary challenge in observational studies arises from \emph{confounding factors}. These are attributes that influence both the treatment assignment and the outcome, thereby obscuring the true causal effect. Consider our example of understanding the causal effect of disability on annual wage (Example~\ref{ex:acs}). Unlike a randomized experiment, disability status in this dataset is not randomly assigned. Instead, it is often correlated with other attributes such as gender and age. These attributes act as confounders because they can influence both the likelihood of having a disability and annual wage.}

\revc{To address confounding variables, Pearl's causal model offers a framework for obtaining unbiased causal estimates by accounting for these confounding attributes, denoted as $\mathbf{Z}$
. This approach relies on key assumptions, including the unconfoundedness and overlap assumptions which we do not elaborate on here. 
Given a set of confounding variables $\mathbf{Z}$, the ATE can be adjusted to the following form:
\begin{flalign}
& ATE(T,O) {=} \mathbb{E}_Z \left[\mathbb{E}[O \mid T{=}1, \boldsymbol{Z} {=} z] {-}
\mathbb{E}[O \mid T{=}0, \boldsymbol{Z} {=} z] \right] \label{eq:conf-ate}
\end{flalign}
This adjusted ATE, as shown in Equation (\ref{eq:conf-ate}), can be estimated directly from observed datasets.}

\revc{In this work, we assume that the user provides a sufficient set of confounding variables $\mathbf{Z}$. In practice, such a set can be identified using a causal discovery algorithm (e.g., \cite{zanga2022survey,glymour2019review}) and applying criteria such as Pearl's backdoor criterion~\cite{pearl2009causal}, or alternatively, through domain expertise.}

\smallskip
\paratitle{ATE Estimators}  
The ATE in Eq.~\ref{eq:conf-ate} can be estimated in high-dimensional settings using standard methods such as \emph{linear regression}~\cite{ding2018causal} and \emph{inverse propensity weighting (IPW)}~\cite{rosenbaum1983central}.  
We focus on these two estimators because their algebraic structure makes them especially amenable to the incremental updates required for efficient repair.  
Linear regression models the outcome as a linear function of the treatment and confounders and reads the ATE from the treatment coefficient.  
IPW estimates the probability of treatment (the \emph{propensity score}, typically via logistic regression) and re-weights each unit by the inverse of this probability to correct for treatment imbalance.  
While these methods rely on simplifying assumptions, our incremental downdate techniques are estimator-agnostic and can be readily transferred to more sophisticated methods such as doubly robust estimation~\cite{bang2005doubly} or double machine learning~\cite{chernozhukov2016double}

\section{Problem Formulation}
\label{sec:problem}

 

We consider a single-relation database over a schema $\attrset$. The schema is a vector of attributes $\attrset {=} (A_1, {\ldots}, A_m)$, where each $A_i$ is associated with a domain $\dom(A_i)$, which can be categorical or continuous. 
A database instance \db\ populates the schema with a set of tuples of the form $t {=} (a_1, \ldots, a_m)$ where $a_i {\in} \dom(A_i)$. 
We use bold letters to represent a subset of attributes $\mathbf{A} \subseteq \attrset$, and an uppercase letter to represent a subset of tuples from the database instance $\Gamma \subseteq \db$.


Consider a binary\footnote{For simplicity, we consider a binary treatment variable. However, our framework can be generalized to handle non-binary variables as well.} treatment variable $T \in \attrset$, and a numerical outcome variable $O \in \attrset$. The causal effect, determined by ATE (\cref{eq:conf-ate}), of $T$ on $O$ when evaluated on the database instance $\db$ is denoted by $ATE_{\db}(T,O)$,
where $\db$ indicates the causal effect was estimated over the database instance $\db$. We assume that the user provides a sufficient set of confounding variables to control for. For simplicity, we do not explicitly refer to them, although they are considered when estimating the causal effect of $T$ on $O$.

If the ATE of $T$ on $O$ is substantially different from the expected causal effect, it indicates a \emph{causal inconsistency} in the database instance $\db$. Our objective is to identify a set of tuples $\Gamma \subseteq \db$ responsible for this discrepancy, as they are the candidate tuples contributing to the shift.

We assume the presence of a desired causal effect, denoted as $ATE_d$, given by, e.g., a domain expert, relevant literature, or estimated over other dataset versions. Given an error bound $\epsilon > 0$,
we assume that $ATE_{\db}(T,O) \notin [ATE_d - \epsilon, ATE_d + \epsilon]$ (i.e., the current ATE is outside the target range).
Our goal is to find a minimal size set of tuples $\Gamma \subseteq \db$ such that, after removing $\Gamma$ from $\db$ we bring the database to a state where the causal effect of $T$ on $O$ is within the range $[ATE_d - \epsilon, ATE_d + \epsilon]$.
The set $\Gamma$ is the tuples responsible for the causal inconsistency. 

If $\Gamma$ is small relative to the dataset size, this indicates that the ATE can be shifted into the desired range by removing only a small fraction of the data. Examining these tuples allows the user to identify data regions that strongly influence the causal estimate and to assess the sensitivity of the causal conclusions to small changes in the data.

We define the Cardinality Repair for Causal Effect Targeting (abbreviated as \probName) problem.

\begin{problem}[\probName]
\label{prob:problem_def}
Let $\db$ be a database instance over a schema $\attrset$. We are given a binary treatment variable $T \in \attrset$, and an outcome variable $O \in \attrset$. 
Let $ATE_{\db}(T,O)$ denote the average treatment effect of $T$ on $O$, as estimated over the database $\db$.
Given a desired causal estimate $ATE_d$ and a threshold $\epsilon > 0$, find a minimal size subset of tuples $\Gamma \subseteq \db$ s.t:
$ATE_{\db \setminus \Gamma}(T,O) \in [ATE_d - \epsilon, ATE_d + \epsilon]$.

\end{problem}



\reva{Our framework assumes a sufficient set of observed confounding variables, in contrast to classical sensitivity analysis, which explicitly addresses unobserved confounders. Prior work typically models hidden confounding or measurement error using strong parametric bias models (e.g., additive-noise or multiplicative-bias frameworks) while assuming a clean dataset. By contrast, our approach is designed to handle messy data, including duplicates, outliers, and corrupted subsets, thereby relaxing assumptions about data quality.}


We consider two data repair settings, both restricted to tuple deletions, with the goal of removing the smallest possible subset. The first setting allows removal of arbitrary tuples, following cardinality repair research for database constraints~\cite{afrati2009repair,miao2020computation}. The second, more constrained setting, removes an entire subpopulation defined by a pattern, aiming to minimize the size of that subpopulation. For both settings, we analyze the computational complexity of the corresponding optimization problems and propose efficient algorithms to solve them.

\paragraph*{Tuple Removal}
\label{subsec:tuple_deletion}
We first focus on cardinality repair~\cite{afrati2009repair,miao2020computation}, which finds a minimal set of tuples to remove to achieve a target causal effect. 
We show that the \probName\ problem is NP-complete 
via a reduction from the SUBSET-SUM problem \cite{kleinberg2006algorithm}.

\begin{proposition}
\label{prop:problem_is_np_hard}
The \probName\ problem is NP-complete.
\end{proposition}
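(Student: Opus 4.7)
The plan is to establish NP-membership first and then prove NP-hardness by reducing from SUBSET-SUM. For membership, given a candidate subset $\Gamma \subseteq \db$ with $|\Gamma| \le k$, I would verify in polynomial time that $ATE_{\db \setminus \Gamma}(T,O) \in [ATE_d - \epsilon, ATE_d + \epsilon]$: both estimators mentioned in Section~\ref{sec:prelim} (linear regression and IPW) admit closed-form or polynomial-time evaluation, so $\Gamma$ itself serves as a polynomial-size certificate.

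For hardness, given a SUBSET-SUM instance $(s_1,\ldots,s_n,S)$ with nonnegative integers $s_i$ and target $S \le A := \sum_i s_i$, I would construct a database stratified by a confounder $Z$ with $n$ values, one stratum per element. In stratum $i$ I place three tuples: a distinguished ``selector'' treated tuple with $O = s_i$, an ``anchor'' treated tuple with $O = -M$ for a large parameter $M$ to be fixed, and a single control tuple with $O = 0$. The stratified ATE is then a clean linear expression: per-stratum ATE in stratum $i$ equals $(s_i - M)/2$ before any deletions, and the overall ATE decomposes as an unweighted average over strata whose terms depend cleanly on which tuples were removed.

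The reduction sets $ATE_d$ to be the stratified ATE obtained by removing exactly $k$ selectors whose outcomes sum to $S$, with $\epsilon = 0$. A case analysis shows that if $\Gamma$ removes selectors in index set $I$ and anchors in index set $J$ (with $I\cap J = \emptyset$, since removing both treated tuples from a stratum, or the sole control, leaves the per-stratum ATE undefined), the target-matching equation reduces to $M(k + |J| - |I|) = s(I) - s(J) - S$. Choosing $M$ larger than $A + S$ forces both sides to vanish, yielding $|I| = k + |J|$ and $s(I) - s(J) = S$. Since $|\Gamma| = |I| + |J| = k + 2|J|$, the minimum is attained at $|J|=0$, giving $|I| = k$ and $s(I) = S$, which is the exact-cardinality variant of SUBSET-SUM; plain SUBSET-SUM reduces to this variant by the standard zero-padding trick (adding $n$ zero-valued items to fix the cardinality to $n$).

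The main obstacle is excluding alternative removal patterns that might hit the target through unintended numerical coincidences between $M$, the $s_i$'s, and the counts $|I|, |J|$. The large-$M$ gap argument above resolves this: since $|s(I) - s(J) - S|$ is bounded by a quantity independent of $M$, while $M(k + |J| - |I|)$ is either zero or has magnitude at least $M$, no other combination of $|I|$ and $|J|$ can satisfy the equation. This pins down the solution structure and yields the equivalence SUBSET-SUM YES $\iff$ \probName\ YES, completing the NP-hardness proof.
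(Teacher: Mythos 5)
Your NP-membership argument matches the paper's (the deleted set is the certificate, verified by one polynomial-time ATE evaluation), and your overall strategy of reducing from \textsc{Subset-Sum} is the same as the paper's. However, your hardness gadget has a genuine gap. You stratify by a confounder $Z$ and assert that ``the overall ATE decomposes as an unweighted average over strata.'' Under the paper's adjustment formula (Eq.~\eqref{eq:conf-ate}), the outer operator is $\mathbb{E}_Z[\cdot]$, i.e.\ strata are weighted by their \emph{empirical} probabilities, and the paper's default estimators (the regression coefficient of $T$, or IPW) are likewise not unweighted stratum averages. Before any deletion all strata have three tuples, so the distinction is invisible; but the whole point of your argument is to analyze what happens \emph{after} deletions, and then strata from which a selector or anchor was removed have two tuples while untouched strata have three. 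The weights therefore depend on $|I|$ and $|J|$, the denominator $3n-|I|-|J|$ enters the expression, and your key identity $M(k+|J|-|I|) = s(I)-s(J)-S$ no longer holds. Redoing the algebra with the correct weights, matching the $M$-coefficients yields a condition of the form $3n(|I|-|J|-k) = -2k|J|$ rather than $|I|-|J|=k$, so the large-$M$ separation no longer cleanly forces $|J|=0$ and $|I|=k$; the correctness of the reduction is not established. (A secondary loose end: you exclude deletions that empty a treatment arm or the control of a stratum by declaring the per-stratum ATE ``undefined,'' but you do not say how the estimator treats such strata, so those cases are not actually ruled out.)

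The fix is essentially the paper's construction: drop the confounder entirely (the problem permits $\mathbf{Z}=\emptyset$, and hardness for that special case suffices), so the ATE is the plain difference of means $\mathrm{AVG}(O\mid T{=}1)-\mathrm{AVG}(O\mid T{=}0)$ with no weighting subtleties. The paper places, for each $x_i$, one treated tuple with $O=x_i$ and one control tuple with $O=0$, plus a single treated ``offset'' tuple with $O=-k$; deleting the treated tuples \emph{outside} a solution set drives the treated mean, and hence the ATE, to exactly $0$. That construction needs no large parameter $M$, no exact-cardinality variant of \textsc{Subset-Sum}, and no stratum bookkeeping, and it is robust to which of the standard estimators is used. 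Your large-$M$ anchor idea is salvageable only if you first commit to an estimator for which the unweighted decomposition actually holds, which is not the one defined in the paper.
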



In Section~\ref{subsec:tuple_algo}, we introduce an efficient greedy heuristic algorithm to address this problem.

\paragraph*{Pattern Removal}
\label{sec:patterns}
In this part, we consider the removal of a subpopulation. 
To specify a subpopulation, we use
{\em patterns}~\cite{roy2014formal,wu2013scorpion,lin2021detecting,youngmann2024summarized} that comprise conjunctive predicates on attribute values. Formally, 
consider a database $\db$ with attributes $T,O,$ $A_1,{\ldots}, A_n$. Assume the domain of $A_i$ is $\dom(A_i)$.
A \emph{conjunctive pattern} $\pattern$ is a formula of the form $\bigwedge_{i\in I} A_i=a_i$ where $I\subseteq \{1,\ldots, n\}$ and $a_i\in \dom(A_i)$ for all $i\in I$. 
We only consider patterns that do not refer to the outcome variable $O$.

Our patterns are restricted to conjunctions of equality predicates, in line with past work on explanations that deem such predicates understandable~\cite{el2014interpretable,roy2015explaining,agmon2024finding}. The advantage of using such patterns lies in their ability to \textit{explain} the removed subset. We leave for future work the consideration of a richer class of patterns, including continuous variables and disjunction.



For a pattern $\pattern = (A_1 {=} a_1 {\wedge} {\ldots} {\wedge} A_l {=} a_l)$, we denote by $\pattern(\db)$ the set of tuples form $\db$ that satisfy $\pattern$. That is, $\pattern(\db)=\{t\mid \bigwedge_{i =  1}^l t[A_i]{=}a_i, t \in \db\}$.
In \emph{pattern deletion}, we delete from $\db$ the set of tuples $\pattern(\db)$ for some pattern $\pattern$.
Our goal is to identify the smallest subpopulation in terms of tuples from $\db$, defined by some pattern $\pattern$, such that after removing $\pattern(\db)$ from $\db$, the estimated ATE is within the target range. More formally,

\begin{problem}[\probName\ - pattern removal]
\label{prob:patterns}
Let $\db$ be a database instance over a schema $\attrset$. We are given a binary treatment $T \in \attrset$, and an outcome $O \in \attrset$. 
Let $ATE_{\db}(T,O)$ denote the causal effect of $T$ on $O$, as estimated over the database $\db$.
Given a desired causal estimate $ATE_d$ and a threshold $\epsilon > 0$, find a minimal size subpopulation defined by a pattern $\pattern$  such that:
$$ATE_{\db \setminus \pattern(\db)}(T,O) \in [ATE_d -\epsilon, ATE_d + \epsilon]$$
\end{problem}

\revc{In order to analyze the complexity of the problems, we implicitly refer to their decision-problem versions, where an upper bound on the size of the removed subpopulation is given. This analogy is sound in the sense that there is a polynomial-time solution to either Problem \ref{prob:problem_def} or Problem \ref{prob:patterns} if and only if there is a polynomial-time solution for the decision version.}

We can show that Problem \ref{prob:patterns} is NP-complete, by reducing it from the \probName\ problem (Problem \ref{prob:problem_def}) where any tuple can be deleted from the data (Proposition \ref{prop:problem_is_np_hard}).  

\begin{proposition}
\label{prop:problem_is_np_hard_pattern}
Problem \ref{prob:patterns} is NP-complete.
\end{proposition}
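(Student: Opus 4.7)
The proof of Proposition 4.2 proceeds in two steps: showing membership in NP and showing NP-hardness by polynomial-time reduction from Problem~\ref{prob:problem_def}.

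Membership in NP is straightforward. A conjunctive pattern $\pattern$ over the schema $\attrset$ consists of at most $|\attrset|$ equality predicates and is thus a polynomial-size certificate. Given $\pattern$, we compute $\pattern(\db)$ in time linear in $|\db|$, estimate $ATE_{\db \setminus \pattern(\db)}(T, O)$ in polynomial time using a standard estimator such as linear regression or IPW (Section~\ref{sec:prelim}), and verify that the result lies in $[ATE_d - \epsilon, ATE_d + \epsilon]$. The decision-version bound on $|\pattern(\db)|$ is checked directly.

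For NP-hardness, the plan is to reduce from Problem~\ref{prob:problem_def}. Given a tuple-level instance with $\db = \{t_1, \ldots, t_n\}$, I would construct an augmented database $\db'$ on the schema $\attrset \cup \{A_1, \ldots, A_n\}$, where the $A_j$ are fresh binary attributes with a one-hot \emph{anti-identifier} encoding: $t_i[A_j] = 0$ if $j = i$ and $t_i[A_j] = 1$ otherwise. All original attributes --- in particular $T$, $O$, and the user-specified confounders --- are preserved, so $ATE_{\db' \setminus S}(T, O) = ATE_{\db \setminus S}(T, O)$ for every $S \subseteq \db$. The central claim is that every subset $J \subseteq \db$ is definable by a conjunctive pattern: taking $\pattern_J = \bigwedge_{i \,:\, t_i \notin J} (A_i = 1)$, a tuple $t_j$ satisfies $\pattern_J$ iff $t_j[A_i] = 1$ for every $i$ with $t_i \notin J$, iff $j \neq i$ for every such $i$, iff $t_j \in J$. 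Hence $\pattern_J(\db') = J$, and conversely any image $\pattern(\db')$ is itself a valid tuple-level subset of $\db$.

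The main step --- and also the main subtlety --- is verifying that this correspondence is tight in both directions, so that the pattern-level optimum equals the tuple-level optimum exactly. The construction above shows that any tuple-level solution of size $k$ yields a pattern-level solution of size $k$ via $\pattern_J$, and any pattern-level solution yields a tuple-level solution of the same size by simply taking its image. Care is also needed to ensure the encoding does not violate the restriction that patterns avoid the outcome attribute $O$, which holds since the $A_j$ are disjoint from $O$. The augmentation adds $n$ attributes and no tuples, so the reduction runs in polynomial time. A polynomial-time algorithm for Problem~\ref{prob:patterns} would therefore yield one for Problem~\ref{prob:problem_def}, contradicting Proposition~\ref{prop:problem_is_np_hard}; combined with membership, this establishes NP-completeness.
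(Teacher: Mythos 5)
Your proposal is correct and follows essentially the same route as the paper: an identifier-style augmentation with one fresh attribute per tuple so that every subset becomes definable by a conjunctive pattern, reducing the tuple-level problem (Proposition~\ref{prop:problem_is_np_hard}) to the pattern-level one. The only difference is cosmetic --- you use an ``anti-identifier'' encoding with predicates $A_i=1$ where the paper puts a $1$ on the diagonal and selects with $S_i=0$ --- and your explicit checks that the pattern avoids $O$ and that the ATE is unaffected by the fresh attributes are welcome but match the paper's implicit reasoning.
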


\revc{We remark that for both problems, membership in NP is witnessed by listing the subpopulation that is removed (for Problem~\ref{prob:problem_def}) and by the pattern removed (for Problem~\ref{prob:patterns}), where we can verify the solution in polynomial time by removing the subpopulation and computing the new ATE.}
We note that both Problem \ref{prob:problem_def} and Problem \ref{prob:patterns} remain hard even in the case where there are no confounding variables. 
In light of this observation, we have the following in particular.

\begin{proposition}
Problem \ref{prob:problem_def} and Problem \ref{prob:patterns} are NP-complete
for AVG and ATE.  
\end{proposition}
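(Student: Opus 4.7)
The plan is to observe that the SUBSET-SUM reductions underlying Propositions~\ref{prop:problem_is_np_hard} and~\ref{prop:problem_is_np_hard_pattern} can be instantiated so that they work with only the simplest possible aggregate functions: plain AVG of the outcome, or a confounder-free ATE. Membership in NP is immediate in both problems for both aggregates, by the same polynomial-time verification used for the general case (remove the candidate subset or pattern match and recompute the closed-form aggregate on the remainder).

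For Problem~\ref{prob:problem_def} with the confounder-free ATE $\mathbb{E}[O\mid T{=}1]-\mathbb{E}[O\mid T{=}0]$, the reduction from Proposition~\ref{prop:problem_is_np_hard} already falls into this pattern: it builds its gadget directly from SUBSET-SUM values without invoking any non-trivial confounder, so the construction and its correctness carry over unchanged. For AVG, I would give an analogous reduction from SUBSET-SUM $(a_1,\ldots,a_n,s)$, building a database over a single outcome attribute $O$ with $n$ tuples whose outcome values are $a_1,\ldots,a_n$ plus a small padding gadget. The target condition $\text{AVG}(\db\setminus\Gamma)\in[ATE_d-\epsilon, ATE_d+\epsilon]$ becomes a linear constraint on the sum and size of $\Gamma$, and by tuning the padding and target one arranges that the SUBSET-SUM instance is positive iff there exists a repair $\Gamma$ of size at most a prescribed bound.

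For Problem~\ref{prob:patterns}, the tuple-to-pattern reduction used for Proposition~\ref{prop:problem_is_np_hard_pattern} does not depend on the form of the aggregate: one appends a fresh unique-identifier attribute to each tuple so that every tuple is the unique match of a singleton conjunctive pattern, making pattern deletion coincide with tuple deletion in the constructed instance. Applying this lifting to the AVG and confounder-free ATE instances above yields hardness for the pattern version under both aggregates.

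The main obstacle is designing the padding gadget for the AVG reduction so that the minimum-cardinality objective faithfully simulates the size-agnostic SUBSET-SUM decision problem. A clean way is to fix the removal size $k$ in advance by inserting sentinel tuples whose extreme outcome values would drive the remaining aggregate outside $[ATE_d-\epsilon, ATE_d+\epsilon]$ if they were removed, forcing any cardinality-minimum $\Gamma$ achieving the target to consist exactly of tuples that encode a valid SUBSET-SUM subset of the $a_i$. Once this correspondence is nailed down for AVG, the confounder-free ATE case follows from the existing proof of Proposition~\ref{prop:problem_is_np_hard}, and the two pattern-level statements follow by the tuple-to-pattern lifting described above.
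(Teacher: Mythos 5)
Your high-level plan---reuse the confounder-free \textsc{Subset-Sum} gadget for ATE and lift tuple deletion to pattern deletion by augmenting the schema with identifiers---is the same strategy the paper follows, but the key step of the lifting is broken as you describe it. You append \emph{one} unique-identifier attribute so that each tuple is the unique match of a singleton pattern. That does not make pattern deletion coincide with tuple deletion: \cref{prob:patterns} deletes the extension of a \emph{single} conjunctive pattern, and with one ID column a pattern either fixes the ID (selecting at most one tuple) or omits it (selecting a set definable from the original attributes alone). An arbitrary subset $\Gamma$ with $|\Gamma|>1$ is in general not expressible, so your reduction cannot simulate arbitrary tuple deletion. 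The paper instead adds $m$ binary indicator attributes $S_1,\dots,S_m$ with $t_i[S_i]=1$ and $t_i[S_j]=0$ for $j\neq i$, and encodes an arbitrary subset $I$ by the \emph{complement} pattern $\bigwedge_{j\notin I} S_j=0$, whose extension is exactly $\{t_i : i\in I\}$. This is the ingredient your construction is missing, and it is what makes the lifting aggregate-agnostic.

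For AVG your reduction is only a sketch, and the gadget you propose does not close the gap you yourself identify: the decision version only bounds $|\Gamma|$ from above rather than fixing it, so the denominator of the average is not pinned down, and ``a linear constraint on the sum and size of $\Gamma$'' is not a \textsc{Subset-Sum} condition; sentinel tuples that must not be removed do not force a prescribed number of removals either. The clean fix, which the paper uses, is to set the target to $0$ with $\epsilon=0$: the constraint then says the remaining outcomes sum to zero, which is independent of how many tuples remain. In fact the existing ATE gadget of \cref{prop:problem_is_np_hard} already \emph{is} an AVG gadget, since every control tuple has outcome $0$ and the ATE equals the average outcome of the treated group (values $x_1,\dots,x_n$ and $-k$); no separate padding construction is needed.
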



\section{The \sysName\ Framework}
\label{sec:algo}

In this section, we introduce the two underlying algorithms of \sysName\ - one for tuple-based removal (\algoNameTuple), and the second for the restricted setting where only a subpopulation can be removed (\algoNamePattern). Both algorithms adopt a heuristic approach to efficiently find a solution, motivated by the fact that the underlying optimization problems are NP-complete. As we demonstrate in our experimental evaluation (Section~\ref{sec:exp}), these algorithms scale well to large, high-dimensional datasets and consistently produce solutions that are smaller in size compared to existing and naive solutions.

\paragraph*{On ILP and SAT formulations}  
\revb{A natural question is whether our problem could be encoded as an Integer Linear Program (ILP) or SAT instance, similar to the Generalized Deletion Propagation (GDP) framework~\cite{makhija2025integer}. However, the key constraint 
$ATE_\db(T,O) \in [ATE_d - \epsilon,\, ATE_d + \epsilon]$ 
is inherently nonlinear. Even without confounders, the ATE is defined as the difference between two means, making it nonlinear in the decision variables. When confounders are included, the ATE is estimated, e.g., via linear regression, whose coefficients depend nonlinearly on the dataset (e.g., through matrix inversion). 
A possible linear relaxation assumes fixed numbers of tuples removed from each subgroup, which makes the constraint linear but requires solving a separate ILP for each combination of removal sizes, an infeasible approach as dimensionality grows. While a reduction to SAT is theoretically possible, it would require encoding nonlinear arithmetic and is computationally impractical. 
Therefore, instead of relying on ILP or SAT solvers, our methods focus on a practically scalable algorithmic design. For completeness, we also compare against a naive brute-force baseline that computes the exact optimum (see Section~\ref{sec:exp}). }



\subsection{\sysName\ for Tuple Removal}
\label{subsec:tuple_algo}

A naive brute-force approach would evaluate every possible tuple subset for removal, checking whether the resulting ATE falls within the target range. However, there can be exponentially many ways of shifting the ATE through tuple deletions (as is the case for functional dependencies~\cite{livshits2020computing}). To this end, we introduce \algoNameTuple, a greedy, optimized algorithm that iteratively removes tuples based on their estimated influence on the ATE, prioritizing those with the highest impact.

Inspired by ``leave-one-out''
influence in causal counterfactuals~\cite{bae2022if,pearl2009causal}, we define the influence of a tuple as follows. 

\begin{definition}[Tuple Influence]
The influence of a tuple $t \in \db$ on the ATE of $T$ on $O$ is defined as:
\[
\texttt{influence}(t) = ATE_{\db}(T, O) - ATE_{\db \setminus \{t\}}(T, O)
\]
where $T$ and $O$ denote the treatment and outcome variables, respectively, and $ATE_{\db}(T, O)$ represents the ATE of $T$ on $O$ computed over the dataset $\db$, while adjusting for confounding variables (omitted from the notation for clarity).
\end{definition}
A positive influence score indicates that removing $t$ from $\db$ would decrease the ATE, while a negative score indicates that its removal would increase the ATE.

A simple greedy approach iteratively removes the most beneficial tuple -either with the highest positive or most negative influence - based on the current ATE and the target range $[ATE_d - \epsilon, ATE_d + \epsilon]$. After each removal, influence scores are recomputed, and the process repeats until the ATE falls within the target range.
To reduce the computational burden of per-tuple ATE influence calculations while preserving structural diversity, we employ a two-stage cluster-based sampling strategy: an initial clustering-based representative selection, followed by an iterative, cluster-aware removal process.


\noindent
\textbf{Clustering.}  
Using the confounding variables $\mathbf{Z}$ taken into account for the ATE estimation, the treatment variable $T$, and the outcome $O$, we apply the $k$-means clustering algorithm~\cite{macqueen1967some} to form $k$ clusters. If the user does not specify $k$, we set  
\[
\small
  k = \max\Bigl(5,\;\min\bigl(\lfloor\sqrt{n}\rfloor,\;\frac{n}{10}\bigr)\Bigr)
\]
where $n = |\db|$, ensuring \(5 \le k \le n\),  that is, the number of clusters
$k$ is neither too small nor too large.


Within each cluster \(\mathcal{C}_k\) , we select \(s\) representative tuples (by default, we set \(s=2\)).  Let
\[
\small
  \mathcal{R}_k =
  \begin{cases}
    \mathcal{C}_k, & |\mathcal{C}_k|\le s\\
    \{\,r_{k,1},r_{k,2}\}, & \text{otherwise}
  \end{cases}
\]
where \(r_{k,1}\) is the tuple closest to the cluster centroid, and \(r_{k,2}\) (and additional reps if \(s>2\)) are chosen at evenly spaced percentiles of the distance-to-centroid distribution (e.g.\ 25th, 75th). We compute the influence score for all representative tuples, then assign each non-rep point to its nearest representative. 
This reduces the number of influence computations at each iteration from \(n\) to at most \(k\cdot s\).

\noindent
\textbf{Iterative Cluster-Based Sampling.}  
During each iteration \(i\), let $\mathcal{A}_i$ denote the set of indices of tuples still in $\db$.
For each cluster \(\mathcal{C}_k\), we define the available subset
\(\mathcal{A}_k^{(i)} = \mathcal{C}_k \cap \mathcal{A}_i\) and set
  $m_k = \min\bigl(5,\,|\mathcal{A}_k^{(i)}|\bigr)$.
  
We draw $m_k$ tuples from the cluster $ \mathcal{C}_k$
\(\{i_{k,1},\dots,i_{k,m_k}\}{\subseteq} \mathcal{A}_k^{(i)}\)
uniformly at random. We then compute their influence score and set the cluster score as the average influence score:
\[
\small
  s_k = \frac{1}{m_k}\sum_{j=1}^{m_k}\texttt{influence}(i_{k,j})
\]
Let $d = \operatorname{sign}\bigl(ATE_d - ATE_{\db^i}(T,O)\bigr)$,
where $ATE_d$ is the target ATE value and $ATE_{\db^i}(T,O)$ is the current ATE (at the $i$-th iteration). We select the cluster with the largest \(s_k \cdot d \). 
Finlay, within this cluster, we remove the tuple with the highest influence score and continue to the next iteration. 



\smallskip 
Additional optimizations we implemented include: \\
\textbf{(1) Sampling}: To improve scalability on large datasets, we operate on a random sample of the data. In our experiments, for large datasets (e.g., the ACS dataset), we sampled 10\% of the data. We then run the \algoNameTuple\ algorithm on this sample to identify tuples for removal. To amplify the effect, we also remove their neighboring tuples in the full dataset, identified using a k-nearest neighbors algorithm using \emph{scikit-learn}'s~\cite{pedregosa2011scikit} implementation. Here, $k$ was set to $100$. \\ 
\textbf{(2) Periodic influence recomputation}: 
We update influence scores every 10 iterations instead of after each removal, significantly reducing runtime while maintaining quality, as a tuple’s influence typically changes only slightly per iteration.\\
\textbf{(3) Offline Sampling}: Sampling is performed at preprocessing if the algorithm is expected to run on a large-scale dataset, as it depends only on the attributes used and not on the specific causal query. Thus, the same sampled-subset can be reused across causal questions involving the same attributes.

\subsection{\sysName\ for Pattern Removal}
\label{subsec:pattern_algo}
Next, we present an efficient algorithm called \algoNamePattern\ for identifying a small subpopulation, defined by a pattern, whose removal shifts the ATE into a target range.
As discussed in Section~\ref{sec:patterns}, this optimization problem is NP-complete, which motivates the need for a heuristic approach.

\smallskip
\noindent
\textbf{Algorithm Overview}
A naive exhaustive search would enumerate all possible subpopulations (i.e., patterns) to identify the smallest one whose removal shifts the ATE into the desired range. However, this approach is computationally inefficient due to the large number of candidate subpopulations.
To address this, our algorithm performs bottom-up random walks over the subpopulation lattice, effectively exploring the space without full enumeration. During these walks, we maintain a dynamic weighting mechanism to reflect the influence of individual predicates on the ATE. This guides the search toward regions of the lattice where predicate combinations are more likely to find impactful subpopulations.
Together, these strategies enable our algorithm to identify small subpopulations with significant influence on the ATE efficiently.

\smallskip

The pseudocode for \algoNamePattern\ is shown in Algorithm~\ref{algo:subgroup}.
It begins by computing the ATE over the full dataset (line~2), then generates all non-empty subgroups defined by combinations of attribute-values over all attributes in
$\db$ besides the treatment and outcome (line~3). This is done in a single pass over the data, inspired by the first step of the Apriori algorithm~\cite{agrawal1994fast}. These most specific patterns correspond to the leaves of the pattern lattice and serve as starting points for random walks.
Each walk starts from a leaf node (line~5) and iteratively removes one predicate at a time (line~11). For each subgroup, the algorithm checks whether its removal moves the ATE into the target range. If such a subgroup is found, it is returned immediately; otherwise, the walk continues. To limit runtime, the algorithm performs at most 
$k$ random walks.

\smallskip
\noindent
\textbf{Dynamic Weighting Mechanism}: Instead of uniformly selecting a predicate to remove from a pattern (line~11 in Algorithm\ref{algo:subgroup}), we maintain a caching mechanism that tracks how often the removal of a predicate shifts the ATE toward the desired direction, as well as the magnitude of the shift. This information is then used to assign probabilities to each predicate, prioritizing those whose removal is more likely to guide the ATE into the target range.

\smallskip 
Additional optimizations we implemented include: \\
\textbf{(1) Early termination of random walks} if the current subgroup exceeds a predefined size threshold $\tau$, prompting the algorithm to abandon that path and initiate a new random walk to find small-size subpopulations. \\
\textbf{(2) Caching}: Previously evaluated patterns are cached to reduce redundant computation. \\
\textbf{(3) Sampling:}
To ensure scalability on large datasets, we uniformly sample 10\% of the data and run the algorithm on this subset. The identified pattern is then applied to remove all matching tuples from the full dataset. This optimization is used only for large datasets (e.g., ACS in  Section~\ref{sec:exp}).

\begin{algorithm}[t]
  \small
  \DontPrintSemicolon
  \SetKwInOut{Input}{input}\SetKwInOut{Output}{output}
  \LinesNumbered
  \Input{A dataset $\db$ with a treatment variable $T$ and an outcome $O$, a target ATE range defined by $ATE_d$ and $\epsilon$, and a number $k >0$. }
  \Output{A subgroup to be removed defined by a pattern $\pattern$ or indication of a failure.} \BlankLine
  \SetKwFunction{EstimateATE}{\textsc{EstimateATE}}
  \SetKwFunction{GetGroups}{\textsc{GetMostSpecificGroups}}
  \SetKwFunction{RemovePredicate}{\textsc{RemovePredicate}}
  \SetKwFunction{IsValidPair}{\textsc{IsValidPair}}
  \SetKwFunction{HasEdge}{\textsc{HasEdge}}
  \SetKwFunction{GetPredecessors}{\textsc{GetPredecessors}}
  \SetKwFunction{GetSuccessors}{\textsc{GetSuccessors}}

  \tcc{\textcolor{blue}{Initial ATE value}}
  $v \gets$ \EstimateATE($T,O,\db$)\\

    $\mathcal{G} \gets$ \GetGroups($T,O,\db$)\\
 \For{$i \in [1,k]$}{

 $\pattern_i \gets $ a random group from $\mathcal{G}$\\

 $v_i \gets $ \EstimateATE($T,O,\db\setminus(\pattern_i(\db))$)\\

 \If{$v_i \in [ATE_d -\epsilon, ATE_d + \epsilon]$}{\Return $\pattern_i$}
 \While{$\pattern_i$ is not empty}{
  \tcc{\textcolor{blue}{A random step on the pattern lattice}}
$\pattern_i \gets \RemovePredicate(\pattern_i)$\\

 $v_i \gets $ \EstimateATE($T,O,\db\setminus(\pattern_i(\db))$)\\

 \If{$v_i \in [ATE_d -\epsilon, ATE_d + \epsilon]$}{\Return $\pattern_i$}
 
 }

         }
  \Return No solution was found

  \caption{The \algoNamePattern\ Algorithm}\label{algo:subgroup}
\end{algorithm}







\section{Incremental ATE Updates}
\label{sec:optimizations}
The main computational bottleneck of \algoNameTuple\ and \algoNamePattern\ lies in the repeated ATE calculation. To this end, we propose optimizations that estimate the ATE after removing data subsets without retraining the model. We focus on two widely used ATE estimators: (1) linear regression~\cite{imbens2015causal} and (2) inverse propensity weighting (IPW)~\cite{imbens2015causal}.
For linear regression, we leverage its closed-form solution to provide both exact and faster approximate ATE updates. For IPW, we build upon an existing machine unlearning technique~\cite{mahadevan2021certifiable}, to avoid full model retraining and yield updated propensity scores in constant time per deletion.


\subsection{ATE Update for Linear Regression}
\label{subsec:liner_unlearning}

We begin with the standard linear regression model:
{\setlength{\abovedisplayskip}{4pt}
 \setlength{\belowdisplayskip}{4pt}
\begin{equation}
  \mathbf{o} = X\boldsymbol{\beta} + \boldsymbol{\varepsilon}
\end{equation}
}
where $X \in \mathbb{R}^{n \times m}$ is the design matrix (one row per data record), $\mathbf{o} \in \mathbb{R}^{n}$ is the vector of outcomes, $\boldsymbol{\beta} \in \mathbb{R}^{m}$ are the model coefficients, and $\boldsymbol{\varepsilon}$ is a noise vector. The ATE of a treatment $T$ on $O$ is defined as the coefficient of $T$, where $T \in \mathbf{X}$, and $\mathbf{X}$ also includes all the confounding variables $\mathbf{Z}$.
To learn the model parameters $\boldsymbol{\beta}$, we use ordinary least squares, which minimizes the squared error between the predicted and actual values. This yields the following equation:
\begin{equation}
  X^{\top}X\boldsymbol{\beta} = X^{\top}\mathbf{o}
\end{equation}
This gives a closed-form expression for the optimal coefficients:
\begin{equation}
  \boldsymbol{\beta} = \left(X^{\top}X\right)^{-1}X^{\top}\mathbf{o}
\end{equation}

Now, suppose we remove a subset $\Gamma \subseteq \db$ of $r$ tuples, namely $|\Gamma| = r$. Let $X_{\mathrm{rmv}} \in \mathbb{R}^{r \times m}$ and $\mathbf{o}_{\mathrm{rmv}} \in \mathbb{R}^r$ denote the corresponding removed tuples of $X$ and $\mathbf{o}$. The updated data becomes:
\[
X_{\mathrm{new}} = X \setminus X_{\mathrm{rmv}}, \quad 
\mathbf{o}_{\mathrm{new}} = \mathbf{o} \setminus \mathbf{o}_{\mathrm{rmv}}
\]

Our goal is to obtain
$\mathbf{\beta}_{\mathrm{new}}$ defined by
\begin{equation}
    \label{eq:new_normal_eq}
    \mathbf{\beta}_{\mathrm{new}}
  \;=\;
  \bigl(X_{\mathrm{new}}^{\top}X_{\mathrm{new}}\bigr)^{-1}
  X_{\mathrm{new}}^{\top}\mathbf{o}_{\mathrm{new}}
\end{equation}
\emph{without} recomputing the inverse from scratch.

\vspace{0.5\baselineskip}
\noindent
For convenience, define
\[
  A := X^{\top}X,
  \qquad
  \Delta := X_{\mathrm{rmv}}^{\top}X_{\mathrm{rmv}},
  \qquad\text{so that}\qquad
  A_{\mathrm{new}} = A - \Delta
\]

\paragraph*{\textbf{Exact Update}}
\label{subsec:woodbury}
To get an exact update, we use the Woodbury matrix identity~\cite{hager1989updating} to get the coefficient update:

{\setlength{\abovedisplayskip}{1pt}
 \setlength{\belowdisplayskip}{1pt}
\begin{equation}
\small
  \label{eq:beta_woodbury}
  \mathbf{\beta}_{\mathrm{new}}
  {=}
  \bigl[
      A^{{-}1}
      {+}
      A^{{-}1}U
      \bigl(I_r {-} U^{\top}A^{{-}1}U\bigr)^{{-}1}
      U^{\top}A^{{-}1}
    \bigr]
    \bigl(X^{\top}\mathbf{y} {-} X_{\mathrm{rmv}}^{\top}\mathbf{o}_{\mathrm{rmv}}\bigr)
\end{equation}
}

\paragraph*{\textbf{Approximate Update}}
\label{subsec:neumann}


When the number of tuples to be removed is small, specifically, when the norm \(\lVert \Delta A^{-1} \rVert < 1\), we can use the \emph{Neumann series}~\cite{yosida2012functional} to approximate the inverse of the matrix \(A - \Delta\). By setting \(Z = \Delta A^{-1}\), we approximate \((I - \Delta A^{-1})^{-1}\) by summing a finite number of terms:
{\setlength{\abovedisplayskip}{1pt}
 \setlength{\belowdisplayskip}{1pt}
\[
(A - \Delta)^{-1} \approx A^{-1} \sum_{k=0}^K (\Delta A^{-1})^k 
\]
}


We set $K=1$, taking the first two terms:
{\setlength{\abovedisplayskip}{1pt}
 \setlength{\belowdisplayskip}{1pt}
\[
\mathbf{\beta}_{\mathrm{new}} \approx \bigl(A^{-1} + A^{-1} \Delta A^{-1}\bigr) \bigl(X^\top \mathbf{o} - X_{\mathrm{rmv}}^\top \mathbf{o}_{\mathrm{rmv}}\bigr)
\]
}

This method lets us efficiently update the solution without recalculating the full inverse. As we show in Section \ref{sec:exp}, this method is faster than the exact method and gives accurate results when the number of tuples we remove is small.






\subsection{ATE Update for IPW}
\label{subsec:ipw_unlearning}
To estimate ATE using IPW method, each tuple is weighted by the inverse probability of receiving the treatment they received, as estimated by a propensity score model. The ATE is computed as the difference in the weighted averages of outcomes between treated and control groups.
Here, we build upon an unlearning method for logistic regression ~\cite{mahadevan2021certifiable}, proposing an incremental update method for the IPW ATE estimator.

We define \(\textbf{t}_i\) and \(\textbf{o}_i\) as the treatment and outcome values of the \(i\)-th unit (tuple), and $z_i$ as its assignment for the confounding variables $\mathbf{Z}$. Let \(p_i = \Pr(\textbf{t}_i = 1 \mid \mathbf z_i;\boldsymbol\theta)=\sigma(\mathbf z_i^{\top}\boldsymbol\theta)\)
be the propensity score of the $i$-th unit obtained from a $\ell_2$-regularised
logistic model with parameters \(\boldsymbol\theta\), where $\mathbf z_i$ are the confounder assignments of the $i$-th unit.
For a dataset $\db$ where $\db| = n$, the IPW estimator estimate the ATE of $T$ on $Y$ as follows:
\begin{equation}
\small
\widehat{\operatorname{ATE}}_{\text{IPW}}(T,O)
   = \frac{\sum_{i=1}^{n} \dfrac{\textbf{t}_i \textbf{o}_i}{p_i}}
          {\sum_{i=1}^{n} \dfrac{\textbf{t}_i}{p_i}}
     \;-\;
     \frac{\sum_{i=1}^{n} \dfrac{(1-\textbf{t}_i)\textbf{o}_i}{1-p_i}}
          {\sum_{i=1}^{n} \dfrac{(1-\textbf{t}_i)}{1-p_i}}
\label{eq:ipw-ate}
\end{equation}

Suppose a subset \(D_{\text{rmv}}\subseteq \db\) is removed, where $|D_{\text{rmv}}|=r$. We need to update \(\boldsymbol\theta\) without
retraining the model from scratch and re-evaluate
Eq.~\eqref{eq:ipw-ate} efficiently.
To achieve this, we use the Fisher mini-batch incremental update algorithm from~\cite{guo2020certified}. The pseudocode is provided in the Appendix. This algorithm takes as input the dataset, the current model parameters, and the subset to be removed, and returns an estimate of the updated parameters.
Rather than performing multiple iterations of gradient descent, the algorithm relies on a single-step update. 
Finally, we use the updated parameters to recompute propensity scores and re-evaluate the ATE (Eq.~\eqref{eq:ipw-ate}).

\section{Experimental study}
\label{sec:exp}
We present an experimental evaluation that evaluates our \sysName\ effectiveness. We aim to address the following questions:  \textbf{Q1:} How does the quality of our proposed solutions compare to naive baselines and existing methods? \textbf{Q2:} How well do our algorithms scale when applied to large, high-dimensional datasets? \textbf{Q3:} To what extent does our proposed ATE incremental update optimizations improve performance? \textbf{Q4:} Is \sysName\ effective even when combined with alternative methods for computing ATE? \reva{\textbf{Q5:} How does \sysName\ compare to existing sensitivity analysis methods?}

\subsection{Experimental setup}
\label{subsec:exp_setup}
All experiments were conducted on a server with an Intel(R) Xeon(R) E5-2680 v3 CPU (2.50GHz, 12 cores) and 128GB of RAM. Our algorithms were implemented in Python3, and our code and the used datasets are publicly available in~\cite{code}. 

By default, we focus on the linear regression estimator for ATE computation, as described in Section \ref{subsec:liner_unlearning}.

\subsubsection*{Datasets \& examined causal questions}
We examine four commonly used datasets. The datasets' statistics and corresponding causal questions are summarized in Table~\ref{tab:datasets}.
\textbf{German Credit:}
This data~\cite{asuncion2007uci} contains details of bank account holders, including demographic and financial information, and credit risk scores. The causal query measures the impact of house ownership on the risk score. The goal is to shift this effect to zero (within a margin of $\pm$0.01), indicating no causal relationship. 
\textbf{Twins}: The Twins dataset~\cite{louizos2017causal} is a widely used benchmark for causal inference, containing data on same-sex twins. The treatment is being the heavier twin, and the outcome is mortality within the first year of life. The initial estimated ATE is -0.016, and our objective is to shift this effect to zero ($\pm$0.001).
\textbf{Stack Overflow:}
The Stack Overflow Developer Survey~\cite{stackoverflowreport} data contains responses from developers worldwide, covering topics such as professional experience, education, and salary-related information. The examined causal question is to quantify the effect of higher education on annual salary. The initial ATE is \$13,236. The goal is to reduce this effect by \$5,000 ($\pm$ \$100), resulting in a target ATE of \$8,236. 
\textbf{ACS:}
The American Community Survey (ACS)~\cite{ACS_Data} dataset is a nationwide survey conducted by the U.S. Census Bureau, providing detailed demographic, social, and economic data. 
The causal question examines the effect of not having a disability on annual wages. The initial ATE is \$8,774, meaning people without disabilities earn that much more on average compared to people with disability. 
The target ATE is set to \$12,000 ($\pm$\$500).


\begin{table*}[ht]
\centering
\footnotesize
\caption{Details of the datasets and corresponding causal questions we use for experiments and case studies.}
\label{tab:datasets}

\renewcommand{\arraystretch}{0.9}
\setlength{\tabcolsep}{4pt}

\resizebox{0.9\textwidth}{!}{
\begin{tabular}{lllp{22mm}p{14mm}p{70mm}ll}
\toprule
\textbf{Dataset}       & \textbf{\#Tuples} & \textbf{\#Atts} &\textbf{Treatment} & \textbf{Outcome} &\textbf{Confounding Variables} & \textbf{Org. ATE} & \textbf{Tar. ATE} \\ \midrule

German &1000 & 17& Owning a house  &Credit risk&Personal status, age&0.13& 0 ($\pm 0.01$)\\
\midrule

Twins &23,968 & 53& Heavier twin  &Mortality&Gestational age, birth weight, prenatal care, abnormal amniotic fluid, induced labor, gender, maternal marital status, year of birth, and total previous deliveries.& -0.016& 0 ($\pm 0.001$)\\
\midrule

SO     & 47,702 & 21& High Education &Annual Salary &Continent, gender, ethnicity&13236 &8236 ($\pm 100$)\\

\midrule

ACS &1,188,308 & 17 & Not having a disability & Annual wage  & Education, public health coverage, private health coverage, medicare for people 65 and older, insurance through employer, gender, age & 8774 & 12000 ($\pm 500$)\\
\bottomrule
\end{tabular}}

\vspace{2mm}
\end{table*}

\subsubsection*{Competing Baselines}
We consider the following baselines:


\noindent\textbf{\revb{OPT}}: \revb{We compare our algorithms against the optimal solution to analyze their quality and runtime trade-offs. We implemented OPT-tuple and OPT-pattern using exhaustive search to obtain the optimal solution.
}

\noindent\textbf{\topk}: This baseline evaluates the impact of each tuple in the data on the ATE and iteratively removes those with the highest influence until the target ATE is achieved. Unlike \algoNameTuple, the influence of each tuple is calculated only once at the start of the process.


\noindent\textbf{\scoded}~\cite{yan2020scoded}: \scoded\ is designed to identify the top-$k$ tuples that contribute most to the violation of a (in)dependence relationship between sets of variables. We use \scoded\ to identify tuples for removal in order to achieve a desired ATE. 
We evaluate both versions of \scoded, exact, and approximate. The value of $k$ is set to match the number of tuples removed by \topk.

\noindent
\reva{\textbf{ZamInfluence}~\cite{broderick2020automatic}:} \reva{ZamInfluence measures how much an estimate changes when a data fraction is removed, identifying deletions that flip its sign or significance. We compare against it in scenarios aiming to push the ATE toward 0.}

\noindent \textbf{Outlier Detection Baselines}: We compare \sysName\ with two widely used outlier detection methods: \textbf{Isolation Forest} (\ifoutlier)~\cite{liu2012isolation} and \textbf{Local Outlier Factor} (\lof)~\cite{cheng2019outlier}. We evaluate these algorithms in two ways: first, as standalone baselines by considering the ATE after their application; second, as preprocessing steps before \algoNameTuple, reporting the total number of removals needed to reach the target ATE. We use scikit-learn's implementation~\cite{pedregosa2011scikit}, setting the removal budget to match the scale of \topk.




For our \algoNameTuple\ and \algoNamePattern\ algorithms, as well as for the \topk\ baseline, we evaluate both their exact and approximate variants based on the ATE incremental update methods described in Section~\ref{subsec:liner_unlearning}.

The time cutoff for all algorithms was set to 10 hours. For the \algoNamePattern\ algorithm, we limited the size of the group to be removed to 20\% of the data, and the number of random walks considered was set to 1000.

\begin{table*}[ht]
\centering
\caption{Comparison of baseline methods. Results include the achieved ATE, whether the target range was met, the number of tuples removed, and runtime. \reva{Highlighted results correspond to the fewest tuples removed.}}
\label{tab:baseline_comparison}
\renewcommand{\arraystretch}{0.9}
\setlength{\tabcolsep}{4pt}

\resizebox{0.67\textwidth}{!}{%
\begin{tabular}{p{45mm}lcccc}
\toprule
\textbf{Dataset} & \textbf{Baseline} & \textbf{Achieved ATE} & \textbf{Hit Range} & \textbf{\#Removals} & \textbf{Time (s)} \\
\midrule
\rowcolor{german}
\multirow{12}{*}{} 
& \algoNameTuple\ (exact) & 0.006 ($\downarrow 95.4\%$) & \checkmark & \hl{\textbf{42}} (4.2\%) & 2.8 \\
\rowcolor{german}
& \algoNameTuple\ (approx) & 0.007 ($\downarrow 94.6\%$) & \checkmark & \hl{\textbf{42}}  (4.2\%)& 2.78 \\
\rowcolor{german}
& \topk\ (exact) & 0.006 ($\downarrow 95.4\%$) & \checkmark & \hl{\textbf{42}}  (4.2\%) & 0.36 \\
\rowcolor{german}
& \topk\ (approx) & 0.007 ($\downarrow 94.6\%$) & \checkmark & \hl{\textbf{42}}  (4.2\%)  & 0.36 \\
\rowcolor{german}
German Credit& SCODED (exact) & 0.02 ($\downarrow 84.6\%$) &$\times$  & \hl{\textbf{42}}  (4.2\%)& 0.08 \\
\rowcolor{german}
(org. ATE: 0.13, tar. ATE: 0 $\pm 0.01$) & SCODED (approx) & 0.08 ($\downarrow 38.5\%$) & $\times$ & \hl{\textbf{42}}  (4.2\%) & 0.03 \\
\rowcolor{german}
&\reva{ZamInfluence}&\reva{-0.025 ($\downarrow 292\%$)}&\reva{$\times$}&\reva{52 (5.2\%)}&\reva{0.02}\\

\rowcolor{german}
& IF + \algoNameTuple\ & 0.008 ($\downarrow 94.1\%$)  & \checkmark & 88  (8.8\%) & 2.8 \\
\rowcolor{german}
& LOF + \algoNameTuple\ & 0.008 ($\downarrow 94.1\%$)  & \checkmark & 89  (8.9\%)& 3.2 \\
\rowcolor{german}
& \algoNamePattern\ (exact) & 0.06 ($\downarrow 53.8\%$)  & $\times$ & 126  (12.6\%) & 50 \\
\rowcolor{german}
& \algoNamePattern\ (approx) & 0.07 ($\downarrow 46.1\%$)  & $\times$ & 196   (19.6\%)& 28 \\
\midrule

\rowcolor{twins}
\multirow{12}{*}{} 
& \algoNameTuple\ (exact) & -0.0009 ($\uparrow 94.4\%$) & \checkmark & 270 (1.1\%) & 92.7  \\
\rowcolor{twins}
& \algoNameTuple\ (approx) & -0.0009 ($\uparrow 94.4\%$) & \checkmark & 270 (1.1\%) & 99.9  \\
\rowcolor{twins}
& \topk\ (exact) & -0.0009 ($\uparrow 94.4\%$) & \checkmark & 367 (1.5\%) & 18.5 \\
\rowcolor{twins}
& \topk\ (approx) & -0.0009 ($\uparrow 94.4\%$) & \checkmark & 367 (1.5\%)  & 18.6 \\
\rowcolor{twins}
Twins& SCODED (exact) & -0.005 ($\uparrow 68.7\%$) &$\times$  & 367 (1.5\%)& 4.4 \\
\rowcolor{twins}
(org. ATE: -0.016, tar. ATE: 0 $\pm 0.001$)& SCODED (approx) & -0.015 ($\uparrow 6.2\%$) & $\times$ & 367 (1.5\%) & 4.02 \\
\rowcolor{twins}
&\reva{ZamInfluence}&\reva{-0.0005 ($\uparrow 99.7\%$)}&\reva{\checkmark} &\hl{\textbf{180}} \reva{$(0.75\%)$}& \reva{0.06}\\


\rowcolor{twins}
& IF + \algoNameTuple\ & -0.0009 ($\uparrow 94.4\%$) & \checkmark & 714 (2.9\%) & 93 \\
\rowcolor{twins}
& LOF + \algoNameTuple\ & -0.0009 ($\uparrow 94.4\%$) & \checkmark & 714 (2.9\%) & 88 \\
\rowcolor{twins}
& \algoNamePattern\ (exact) & -0.0005 ($\uparrow 96.8\%$) & \checkmark & 4692 (19.5\%) & 220 \\
\rowcolor{twins}
& \algoNamePattern\ (approx) & 0.0003 ($\uparrow 101.8\%$) & \checkmark & 4401 (18.3\%)  & 441 \\
\midrule

\rowcolor{so}
\multirow{12}{*}{} 
& \algoNameTuple\ (exact) & 8269 ($\downarrow 37.5\%$) & \checkmark & \hl{\textbf{67}} (0.14\%) & 45.1 \\
\rowcolor{so}
& \algoNameTuple\ (approx) & 8269 ($\downarrow 37.5\%$) & \checkmark & \hl{\textbf{67}} (0.14\%) & 43.7 \\
\rowcolor{so}
& \topk\ (exact) & 8269 ($\downarrow 37.5\%$) & \checkmark & \hl{\textbf{67}} (0.14\%) & 19.6 \\
\rowcolor{so}
& \topk\ (approx) & 8269 ($\downarrow 37.5\%$) & \checkmark & \hl{\textbf{67}} (0.14\%)  & 19.8 \\
\rowcolor{so}
Stack Overflow & SCODED (exact) & 12625 ($\downarrow 4.6\%$) &$\times$  & \hl{\textbf{67}} (0.14\%) & 17.7 \\
\rowcolor{so}
(org. ATE: \$13.2k, tar. ATE: & SCODED (approx) & 16255 ($\uparrow 22.6\%$)  & $\times$ & \hl{\textbf{67}} (0.14\%) & 10.8 \\
\rowcolor{so}
\$8.2k $\pm \$100$)& IF + \algoNameTuple\ & 8274 ($\downarrow 37.4\%$) & \checkmark & 465 (0.95\%) & 58 \\
\rowcolor{so}
& LOF + \algoNameTuple\ & 8290 ($\downarrow 37.3\%$) & \checkmark & 440 (0.94\%) & 44 \\
\rowcolor{so}
& \algoNamePattern\ (exact) & 8250  ($\downarrow 37.6\%$) & \checkmark & 3744 (7.8\%) & 115 \\
\rowcolor{so}
& \algoNamePattern\ (approx) & 8140 ($\downarrow 38.5\%$) & \checkmark & 5673 (11.9\%) & 141 \\

\midrule

\rowcolor{acs}
\multirow{12}{*}{\textbf{ACS}} 
&\algoNameTuple\ (exact) &11512	($\uparrow$ 31.1\%) & $\checkmark$ &\hl{\textbf{8400}} (0.7\%) &1149  \\
\rowcolor{acs}
& \algoNameTuple\ (approx) &11510 ($\uparrow$ 31.1\%) &\checkmark  & \hl{\textbf{8400}} (0.7\%) & 1102 \\
\rowcolor{acs}
& \topk\ (exact) & 11503 ($\uparrow$ 31.1\%) &\checkmark & 9600 (0.8\%)  &649  \\

\rowcolor{acs}
ACS & \topk\ (approx) & 11501 ($\uparrow$ 31.1\%) & \checkmark &9600 (0.8\%)  &647  \\
\rowcolor{acs}
(org. ATE: \$8.7k, tar. ATE: & IF + \algoNameTuple\ & 11583 ($\uparrow$ 31.8\%) & \checkmark &15582 (1.42\%)&867  \\
\rowcolor{acs}

\$12k $\pm \$500$)& LOF + \algoNameTuple\ & 11561 ($\uparrow$31.7\%) & \checkmark & 19582 (1.64\%)& 2072 \\
\rowcolor{acs}
& \algoNamePattern\ (exact) & 11869 ($\uparrow$ 35.3\%)					 & \checkmark &141168 (11.9\%)  & 229  \\
\rowcolor{acs}
& \algoNamePattern\ (approx) & 13629 ($\uparrow$ 55.3\%) &$\times$  &211497 (17.8\%)&  3063 \\

\bottomrule
\end{tabular}}

\end{table*}



\subsection{Quality Evaluation}
\label{subsec:ex_quality}
For each examined causal question, we ran all algorithms. For \algoNamePattern, which is a randomized algorithm, we ran it three times and report the average runtime and the best result achieved. 
In all cases, the outlier detection methods (\ifoutlier\ and \lof) failed to identify tuples that significantly affect the ATE. As a result, their standalone results are omitted from presentation; we report only their performance when used as a preprocessing step for our \algoNameTuple\ algorithm. 
The results are shown in Table~\ref{tab:baseline_comparison}. In the Appendix, we show heatmaps indicating the overlap between different solutions.

\noindent\underline{\textbf{Results Summary}}: 
When only a few hundred tuples need to be removed, \topk\ and \algoNameTuple\ yield nearly identical solutions, with \topk\ being faster. As the required removals increase, \topk\ degrades, while \algoNameTuple\ finds smaller solutions, \reva{highlighting the importance of recomputing influence scores}. \reva{ZamInfluence identified small subsets to shift the ATE to zero, but cannot target other ATE values.}
\algoNamePattern\ consistently identified compact influential subpopulations but removed more tuples overall. The approximate update method was faster and comparable to the exact one for small removals but less effective for larger ones. \scoded\ failed to reach the target ATE in all cases, and \ifoutlier\ and \lof\ removed tuples with minimal ATE impact, offering no benefit even as preprocessing for \algoNameTuple.

\noindent
\underline{German Credit}:
\topk\ and \algoNameTuple\ identified the same 42 tuples (4\% of the data) whose removal reduced the ATE from 0.13 to 0.007 - a 94.6\% drop, with no difference between exact and approximate update methods. \algoNamePattern\ failed to find any subgroup smaller than 20\% of the data that achieved the target ATE, though it discovered that removing a specific group (foreign workers with low checking balances and long residence) reduced the ATE by over 50\%. Outlier detection using IF and LOF did not help reduce the number of tuples \algoNameTuple\ needed to remove, as the same 42 deletions were still required. \reva{Tuples identified by \scoded\ and ZamInflunce did affect the ATE, but their removal alone did not bring it into the target range}.

\vspace{1mm}
\noindent
\underline{Twins}: Both \topk\ and \algoNameTuple\ achieved the target ATE, but \algoNameTuple\ did so more efficiently, removing only 270 tuples compared to 367, underscoring the importance of recalculating influence scores. Approximate and exact ATE update methods yielded similar results. \reva{Here, ZamInfluence found the smallest solution that achieves the target ATE, but it is only applicable when the target ATE is zero.} \algoNamePattern\ found relatively small subpopulations that achieved the target ATE, with the approximate variant identifying a group -- first-born twins with birth weights between 1,786g and 1,999g -- about 18\% of the data whose removal reversed the ATE sign from negative to positive, increasing it by 102\%.
\scoded\ again failed to identify a solution, and outlier detection methods did not reduce the number of required deletions, indicating that the tuples selected by our method are not merely outliers. 

\vspace{1mm}
\noindent
\underline{Stack Overflow}:
Here again \topk\ and \algoNameTuple\ identified the same 67 tuples (just 0.14\% of the data), whose removal decreases the ATE by 37.5\%. \algoNamePattern\ successfully identified impactful subpopulations; notably, the exact variant found that removing 3,744 individuals (7.8\% of the data) -- male respondents from the US aged 25-34, not students, with no dependents -- lowered the ATE by 38\%.
\scoded\ again failed to find tuples that shift the ATE into the target range, with its approximate variant even increasing the ATE. Here again using either IF or LOF as a preprocessing step for \algoNameTuple\ had no effect on reducing its workload.

\vspace{1mm}
\noindent
\underline{ACS}: Here, both \topk\ and \algoNameTuple\ achieved similar target ATE values, but \algoNameTuple\ required fewer deletions: 8.4k tuples vs 9.6k, demonstrating its advantage when large removal is needed, as it re-evaluates tuple influence during the process while \topk\ does not. The exact \algoNamePattern\ algorithm identified a meaningful subpopulation (11.9\% of the data) whose removal raised the ATE by 35\%, while the approximate version overshot the target range due to accumulated error from removing a large number of tuples (221k). \scoded\ results are omitted, as it can be applied only when the goal is to push the ATE towards zero, indicating an independence relationship.

As discussed in Example~\ref{ex:acs}, achieving a comparable downward shift of the ATE with \algoNameTuple\ requires removing only 7,200 tuples (0.6\% of the data). In contrast, \algoNamePattern\ can only find a solution that removes 889,626 tuples (74\% of the data), resulting in an ATE of 5,900 (a 32.1\% decrease). This subpopulation consists of individuals with no disability who hold Medicare. This asymmetry reveals an important insight: while the ATE is highly sensitive to upward shifts, it appears significantly more robust to downward ones when the intervention involves removing entire subpopulations.

\begin{table}[ht]
\centering
\caption{Quality results for using the incremental update optimization for IPW.}
\label{tab:baseline_comparison_ipw}
\renewcommand{\arraystretch}{0.9}
\setlength{\tabcolsep}{4pt}

\resizebox{0.47\textwidth}{!}{%
\begin{tabular}{p{10mm}lcccccccccc}
\toprule
\textbf{Dataset} & \textbf{Algorithm} & \textbf{Achieved ATE} & \textbf{\# Removals} & \textbf{Time} \\
\midrule

\rowcolor{german}

German& \sysName\ (no incremental update) & 0.009 ($\downarrow95.2\%$)  & 35 (3.5\%) & 35.8 \\
\rowcolor{german}
    Credit& \sysName\ & 0.009 ($\downarrow95.2\%$)  & 35  (3.5\%)& 25.5 \\

\midrule

\rowcolor{twins}

Twins &  \sysName\ (no incremental update) & 0 ($\uparrow100\%$)  & 300 (1.2\%) &  832 \\
\rowcolor{twins}
& \sysName\ & 0 ($\uparrow100\%$)  & 300 (1.2\%) & 712  \\

\bottomrule
\end{tabular}
}

\end{table}

\subsubsection{Estimating ATE with IPW}
We revisit our case study using IPW to estimate the ATE and apply the corresponding incremental update optimization. Since ATE update with IPW is significantly slower than with the linear model, we focus on the smaller German Credit and Twins datasets, as others exceed the time limit. 
Results are shown only for \algoNameTuple, as \algoNamePattern\ and \topk\ show similar trends.
Table~\ref{tab:baseline_comparison_ipw} shows that \algoNameTuple\ yields very similar results across the two ATE estimators and removes almost the same tuples. For example, in German Credit, it removes 35 tuples (compared to 42 with the linear model), achieving nearly identical target ATEs. This suggests that tuples influential under one estimator are typically influential under the other - a trend further supported in Section~\ref{subsec:ate_estimators}.
We also observe that using the optimization reduces runtime by approximately 20\%, compared to not using the IPW incremental update. However, the linear regression-based optimization is consistently faster and produces similar outcomes.

\subsubsection{\revb{Comparison to the optimal solution}}
\label{exp:opt}
\revb{Running an exhaustive brute-force search to obtain the optimal solution proved infeasible even for the small German Credit dataset. To nevertheless assess how closely \sysName\ approximates the optimal solution, we designed controlled experiments using synthetic data where an upper bound on the number of deletions is known, thus bounding the search space OPT.}
\revb{We generated synthetic datasets of increasing sizes (up to 50 tuples for OPT-tuples, as larger instances exceeded our time cutoff), each containing a treatment variable, an outcome, and three confounders. The target ATE was defined as the ATE computed on each dataset with $\epsilon = 0$. We then injected 20\% noisy records into the data, making the number of noisy tuples an upper bound on the solution size. We compared \algoNameTuple\ against OPT-tuple, and \algoNamePattern\ against OPT-pattern. The results are depicted in Figure~\ref{fig:opt}. We observe that \algoNameTuple\ achieves solutions comparable in quality to those of the OPT-tuple, while being significantly faster. 
For \algoNamePattern, the quality of the produced solutions is also comparable to that of OPT-pattern, but the runtime differences are smaller. This behavior stems from the experimental setup: in this synthetic data, there was a single dominant optimal solution to identify, rather than multiple similar solutions, as in the other experiments. Consequently, \algoNamePattern\ required more time to find this solution, and in some cases, it failed to find it. In real-world scenarios, however, where numerous subpopulations may yield comparable effects, the heuristic nature of \algoNamePattern\ allows it to find a good solution within a limited time, albeit without formal guarantees on optimality.}

     
       

\begin{figure*}[htbp]
    \centering
    \begin{minipage}[t]{0.23\textwidth}
        \centering
        \includegraphics[height=2.4cm]{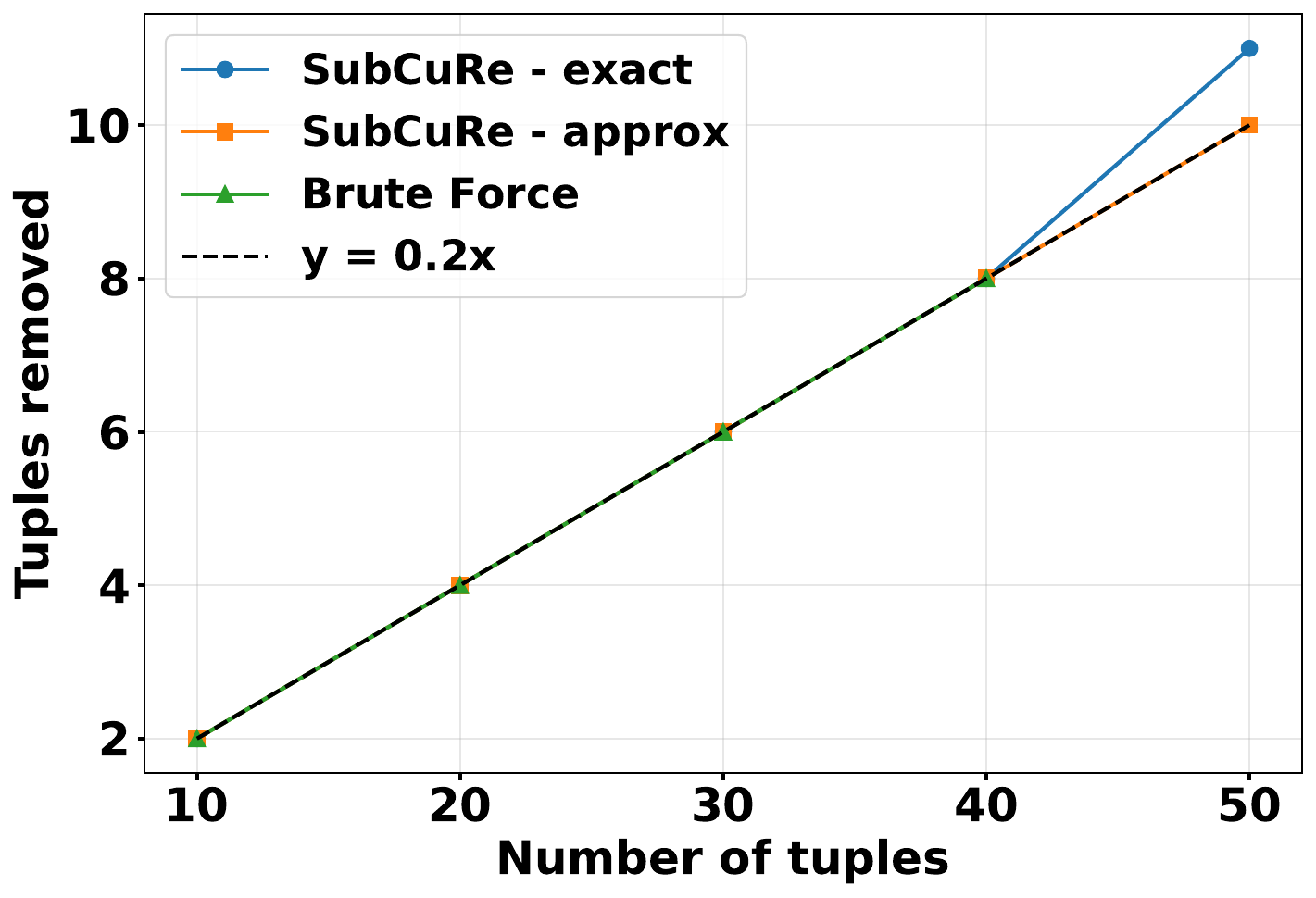}
        \caption*{\reva{(a) Quality - tuple}}
        \label{fig:image1}
    \end{minipage}
    \begin{minipage}[t]{0.23\textwidth}
        \centering
        \includegraphics[height=2.4cm]{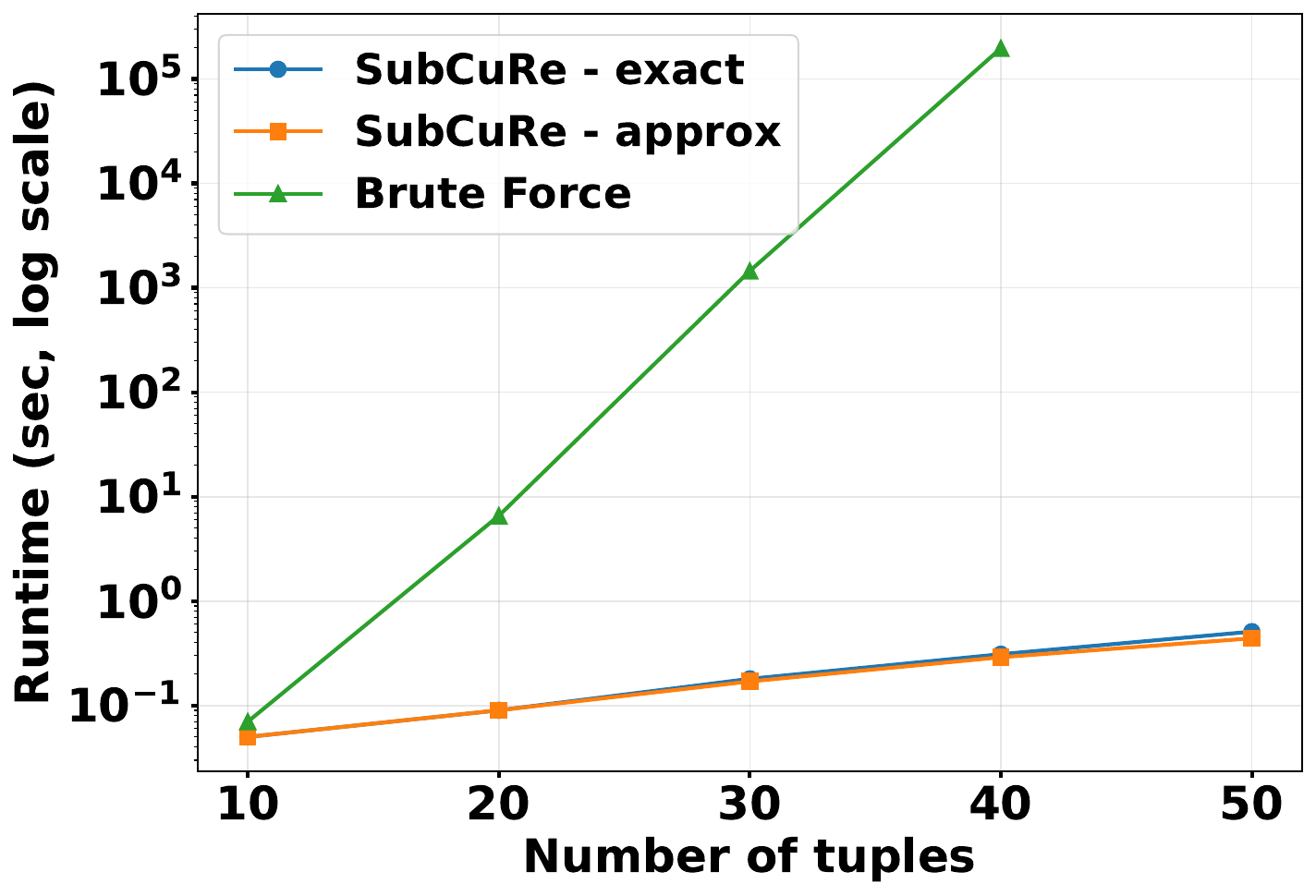}
        \caption*{\reva{(b) Runtime - tuple}}
        \label{fig:image2}
    \end{minipage}
    \begin{minipage}[t]{0.23\textwidth}
        \centering
        \includegraphics[height=2.4cm]{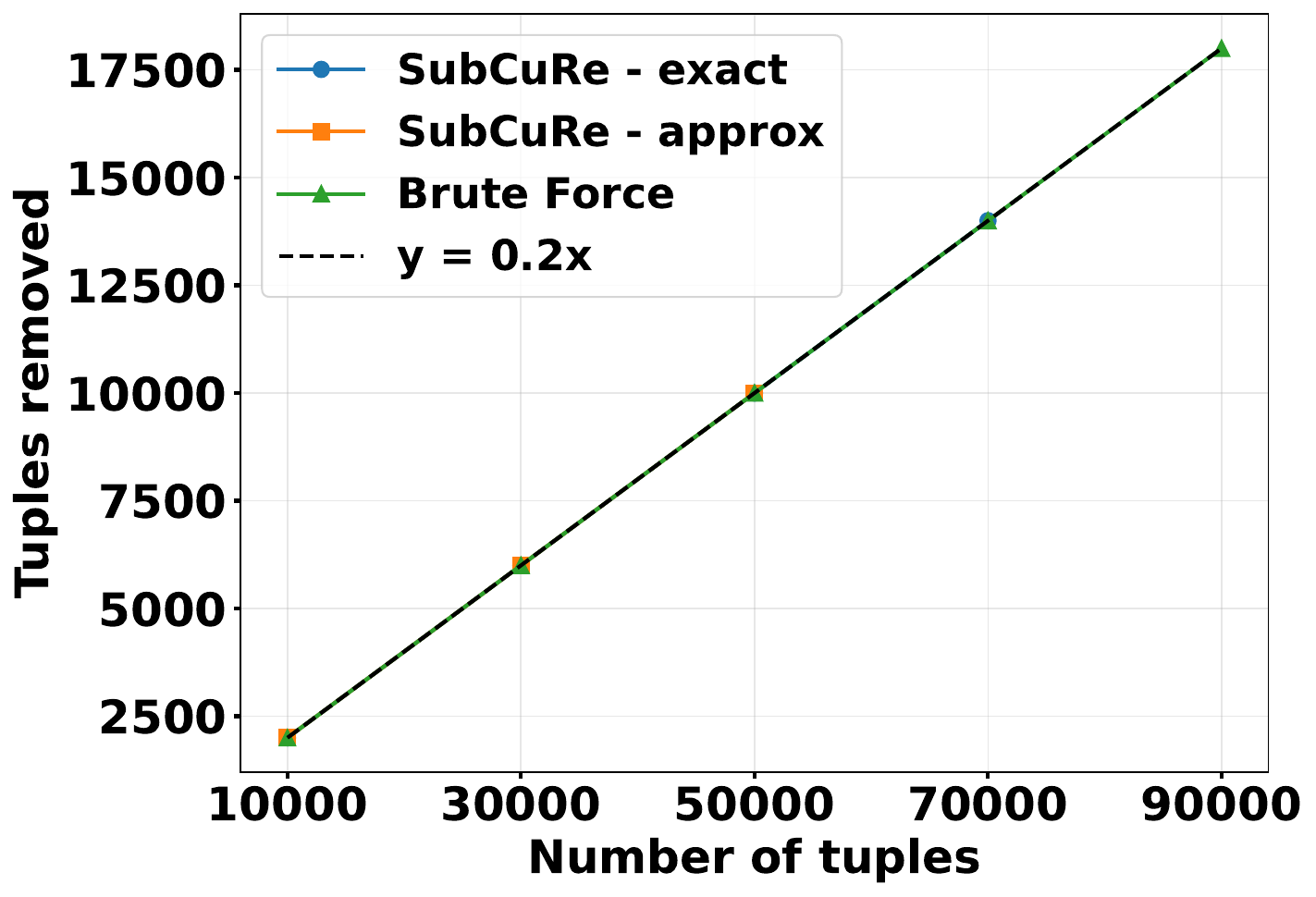}
        \caption*{\reva{(c) Quality - pattern}}
        \label{fig:image3}
    \end{minipage}
       \begin{minipage}[t]{0.23\textwidth}
        \centering
        \includegraphics[height=2.4cm]{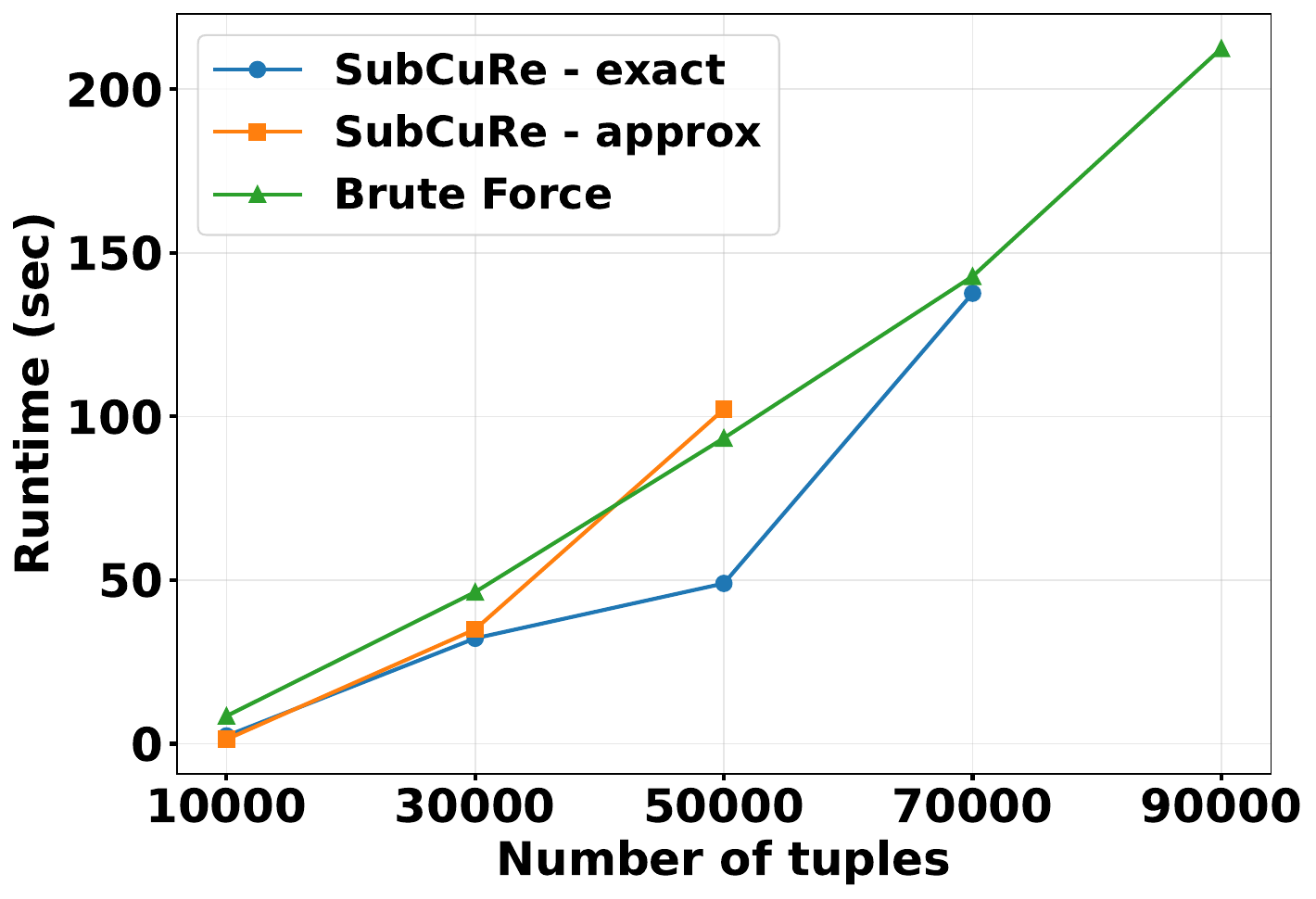}
        \caption*{\reva{(d) Runtime - pattern}}
        \label{fig:image3}
    \end{minipage}
    \caption{\reva{Comparison between OPT and \sysName\ on
synthetic data.}}
    \label{fig:opt}
\end{figure*}

We also compared \algoNameTuple\ and \topk\ on larger synthetic datasets. Both algorithms produced solutions smaller than the upper bound for up to 1.3k noisy tuples. However, as the number of noisy tuples increased - requiring more deletions to reach the target ATE - the performance of \topk\ degraded, since its single influence-score computation became outdated. Full details appear in the Appendix.

\subsubsection{\reva{Robustness to data quality issues}}
\label{subsec:data_quality}
\reva{To assess the robustness of \sysName\ to data quality issues, we conducted experiments that inject controlled noise into the datasets, following a procedure similar to prior work~\cite{horizon}. We consider three types of common data quality issues: outliers, duplicates, and missing values (with zeros used as placeholders), and varied the noise level from 5\% to 55\%. We evaluated the impact of these perturbations on two datasets, German Credit and Twins, measuring how the performance of \sysName\ degrades as the level of noise increases. This setup allows us to identify the point at which the system’s behavior begins to deteriorate, thereby quantifying its robustness to common data quality issues. The results are depicted in Figure~\ref{fig:data_quality}. Our results show that overall, both algorithms remain stable under varying levels and types of noise, demonstrating strong robustness. For duplicates and missing values, \algoNameTuple\ maintains relatively consistent performance as noise increases, indicating resilience to these perturbations. For outliers, it is somewhat less stable, since this type of noise directly affects the outcome values (by design) and, consequently, the ATE, requiring more tuples to be removed to reach the target ATE. \algoNamePattern\ is generally less stable across noise levels, and in some cases, failed to find a solution altogether. However, when a solution was found, its size was comparable to that obtained in the noise-free setting. This behavior is expected: because \algoNamePattern\ operates at the pattern level rather than the tuple level, it cannot selectively remove corrupted tuples that disproportionately influence the ATE.}


\begin{figure*}[htbp]
    \centering
    \begin{minipage}[t]{0.25\textwidth}
        \centering
        \includegraphics[height=2.4cm]{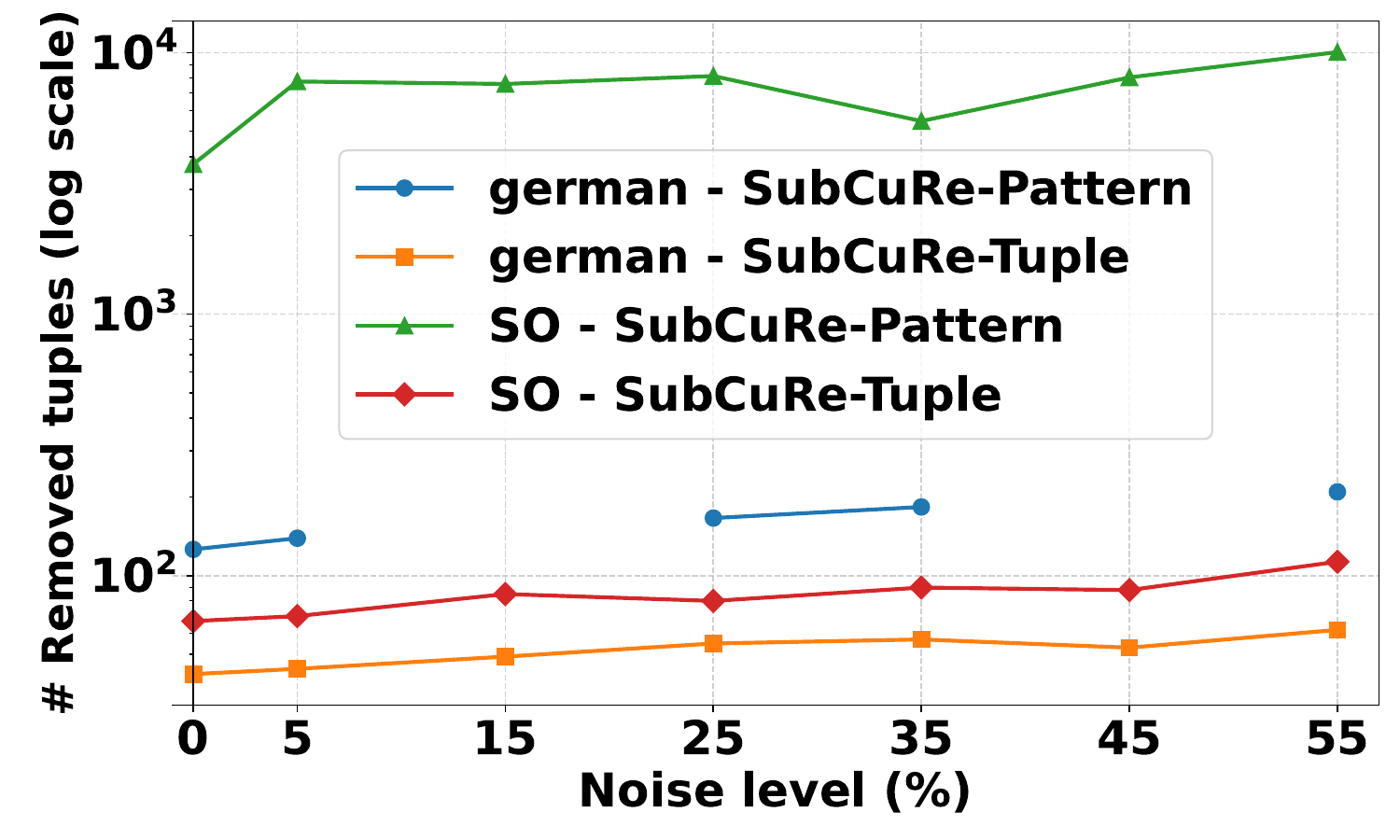}
        \caption*{\reva{(a) Duplicates}}
        \label{fig:image1}
    \end{minipage}
    \begin{minipage}[t]{0.25\textwidth}
        \centering
        \includegraphics[height=2.4cm]{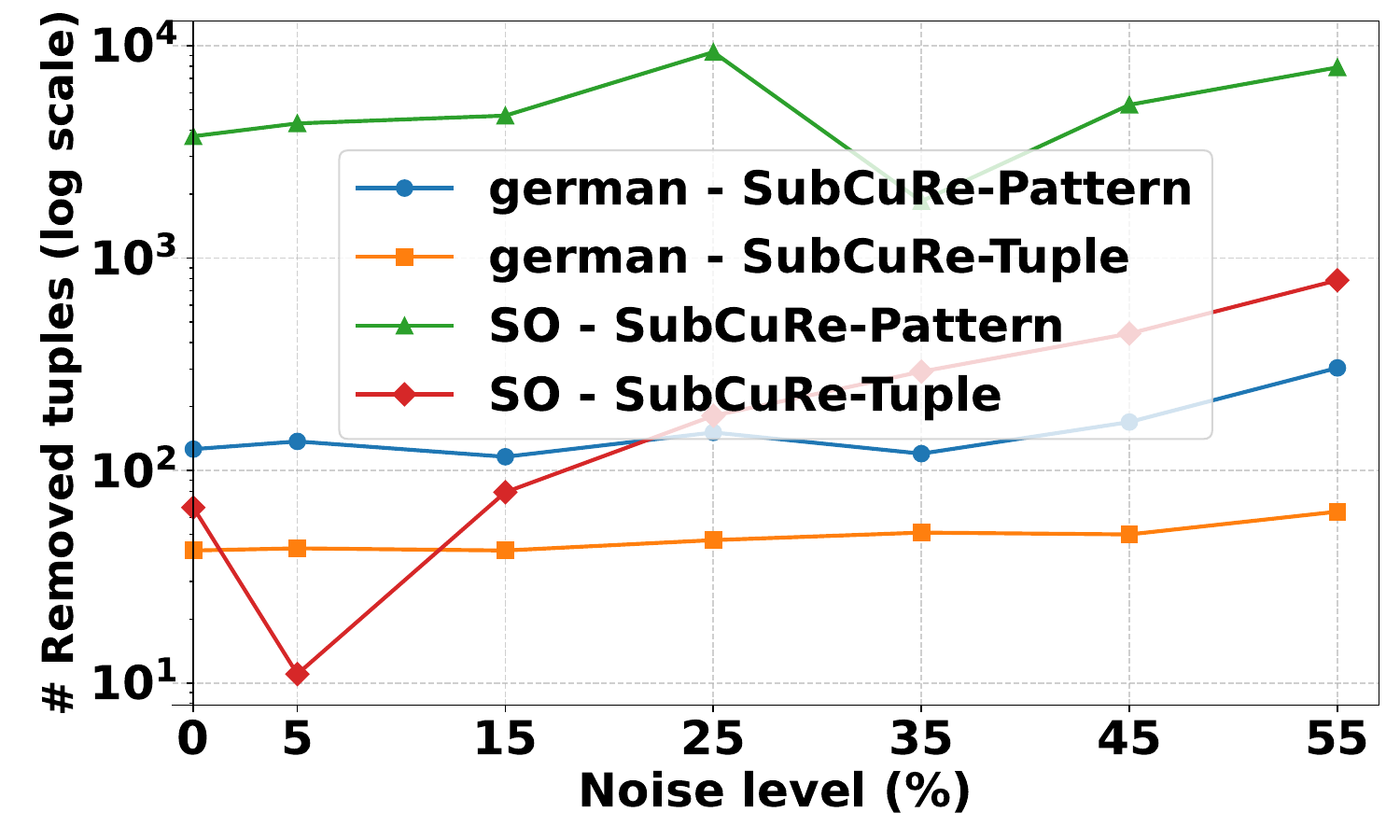}
        \caption*{\reva{(b) Missing values}}
        \label{fig:image2}
    \end{minipage}
    \begin{minipage}[t]{0.25\textwidth}
        \centering
        \includegraphics[height=2.4cm]{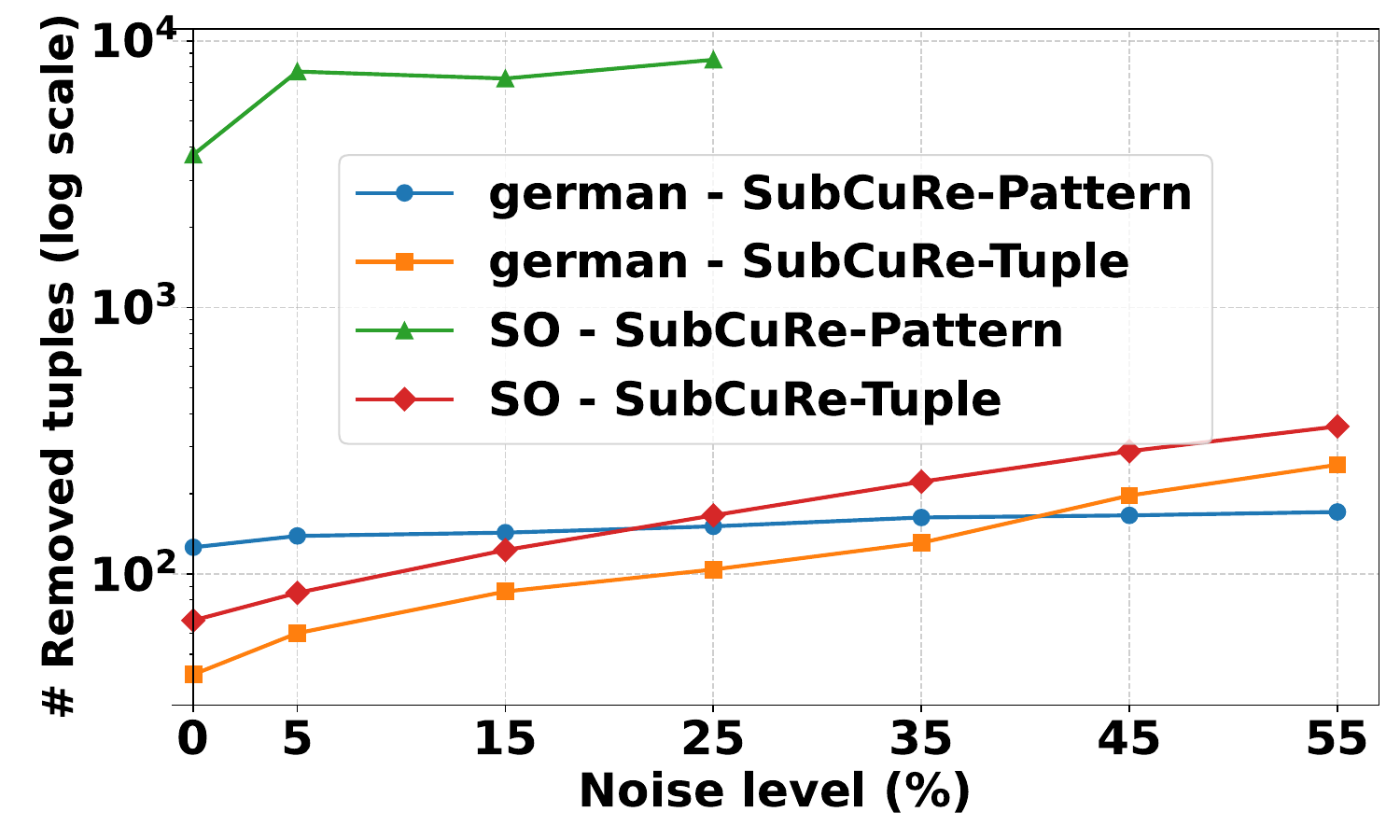}
        \caption*{\reva{(c) Outliers}}
        \label{fig:image3}
    \end{minipage}
    \caption{\reva{Number of removed tuples as a function of noise.}}
    \label{fig:data_quality}
\end{figure*}

\begin{figure*}[htbp]
    \centering
    \begin{minipage}[t]{0.25\textwidth}
        \centering
        \includegraphics[height=2.4cm]{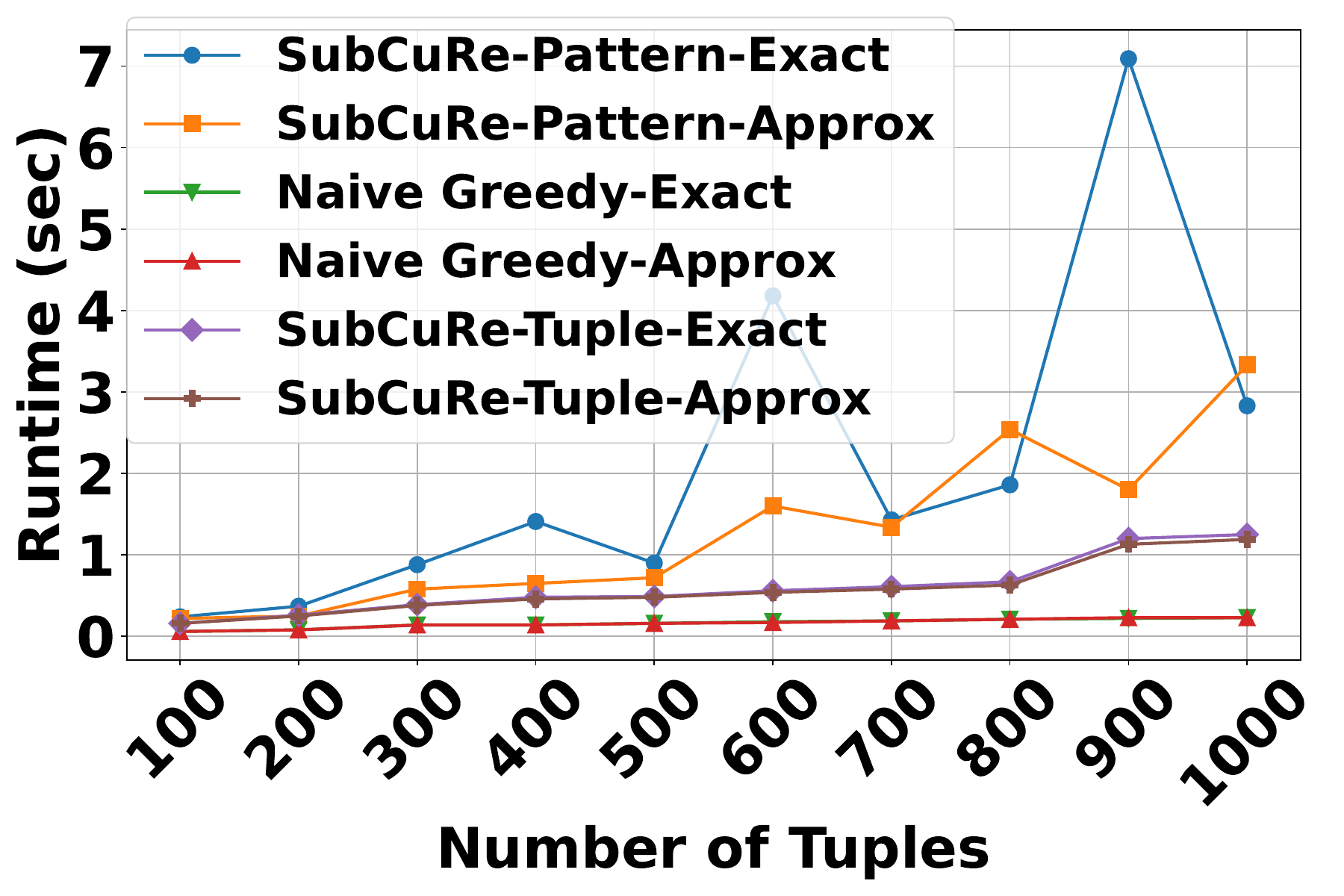}
        \caption*{(a) German Credit}
        \label{fig:image1}
    \end{minipage}
    \begin{minipage}[t]{0.25\textwidth}
        \centering
        \includegraphics[height=2.4cm]{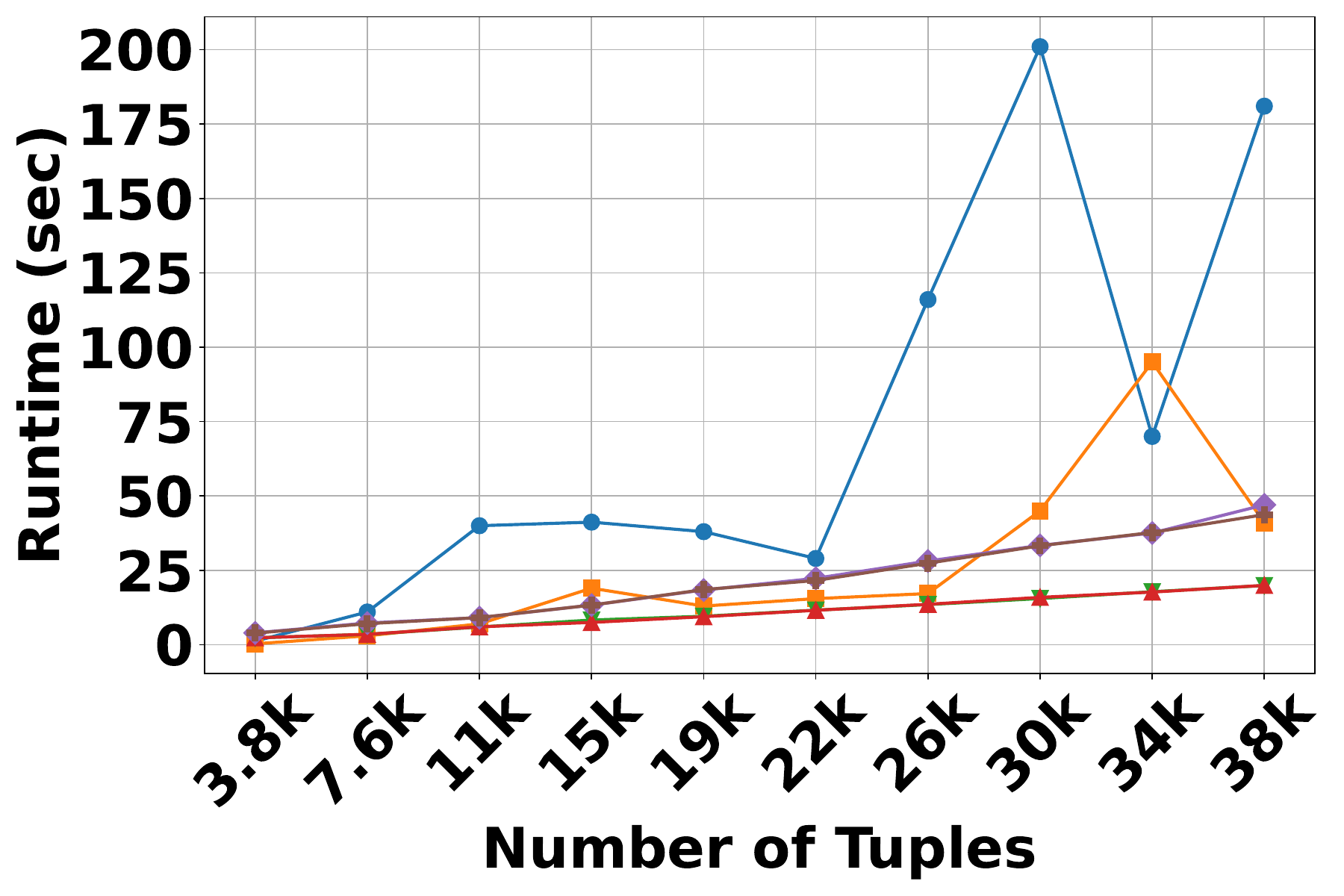}
        \caption*{(b) Stack Overflow}
        \label{fig:image2}
    \end{minipage}
    \begin{minipage}[t]{0.25\textwidth}
        \centering
        \includegraphics[height=2.4cm]{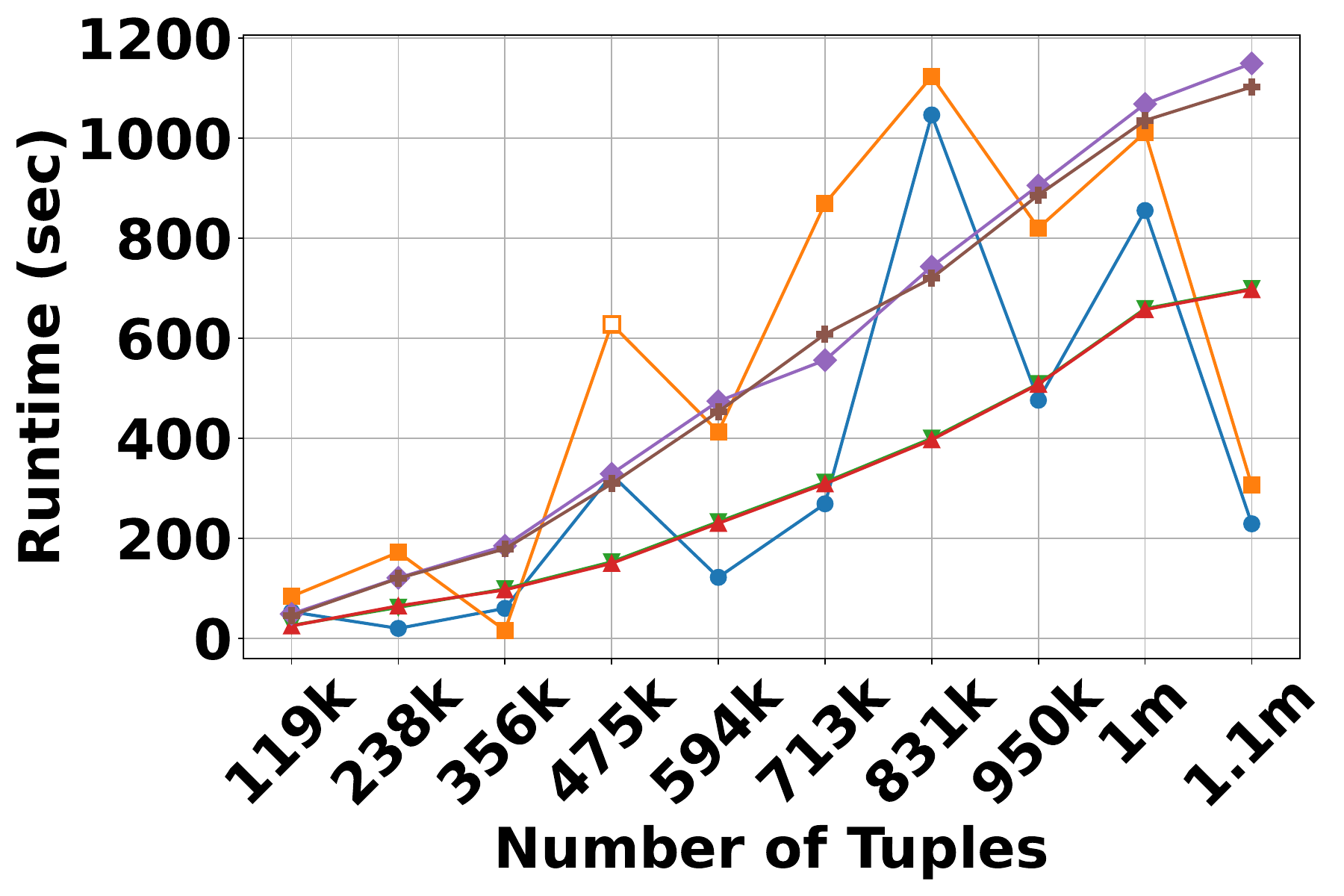}
        \caption*{(c) ACS}
        \label{fig:image3}
    \end{minipage}
    \caption{Runtime as a function of number of tuples in the data.}
    \label{fig:runtime_tuples}
\end{figure*}


\begin{figure*}[htbp]
    \centering
    \begin{minipage}[t]{0.25\textwidth}
        \centering
        \includegraphics[height=2.4cm]{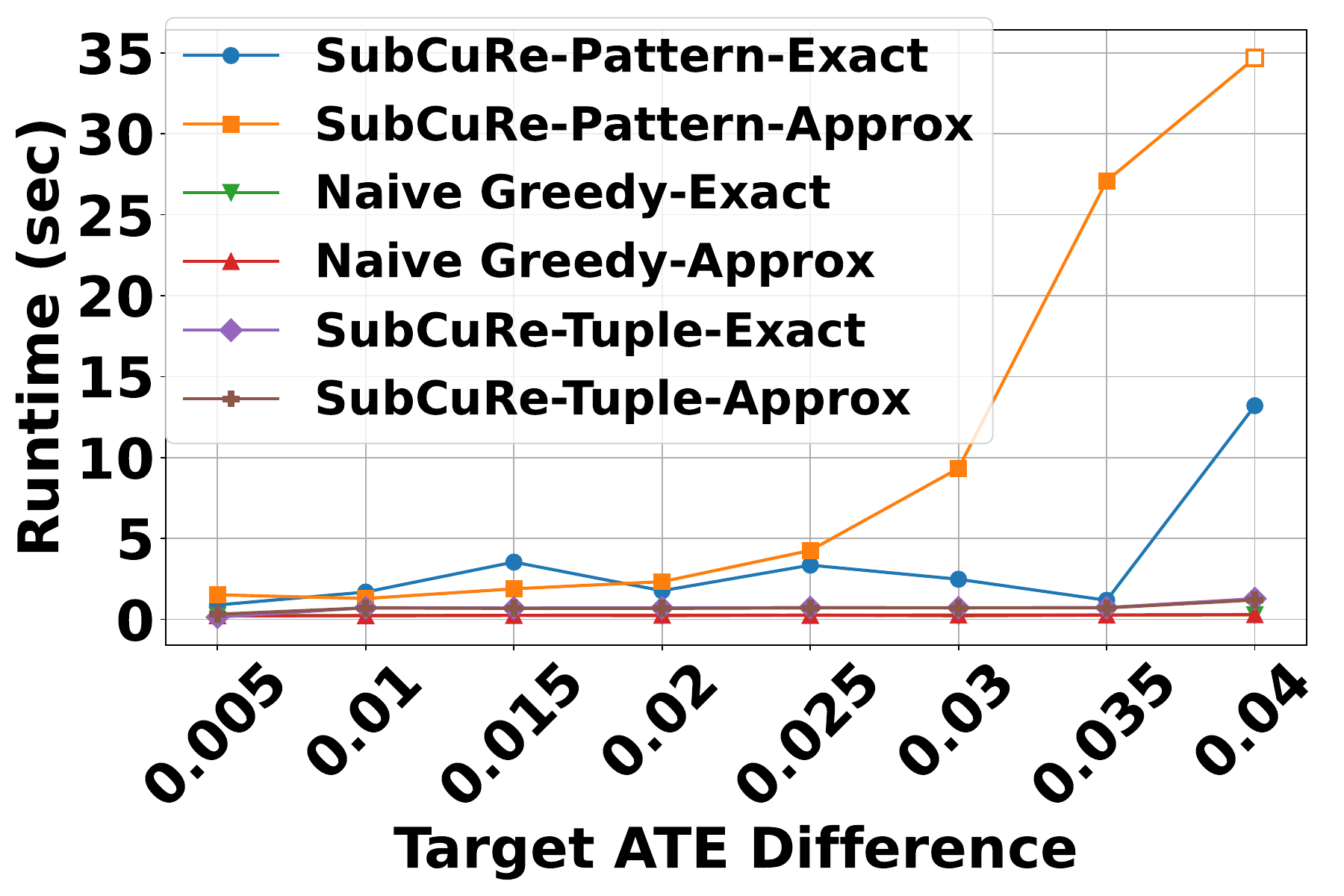}
        \caption*{(a) German Credit}
        \label{fig:image1}
    \end{minipage}
    \begin{minipage}[t]{0.25\textwidth}
        \centering
        \includegraphics[height=2.4cm]{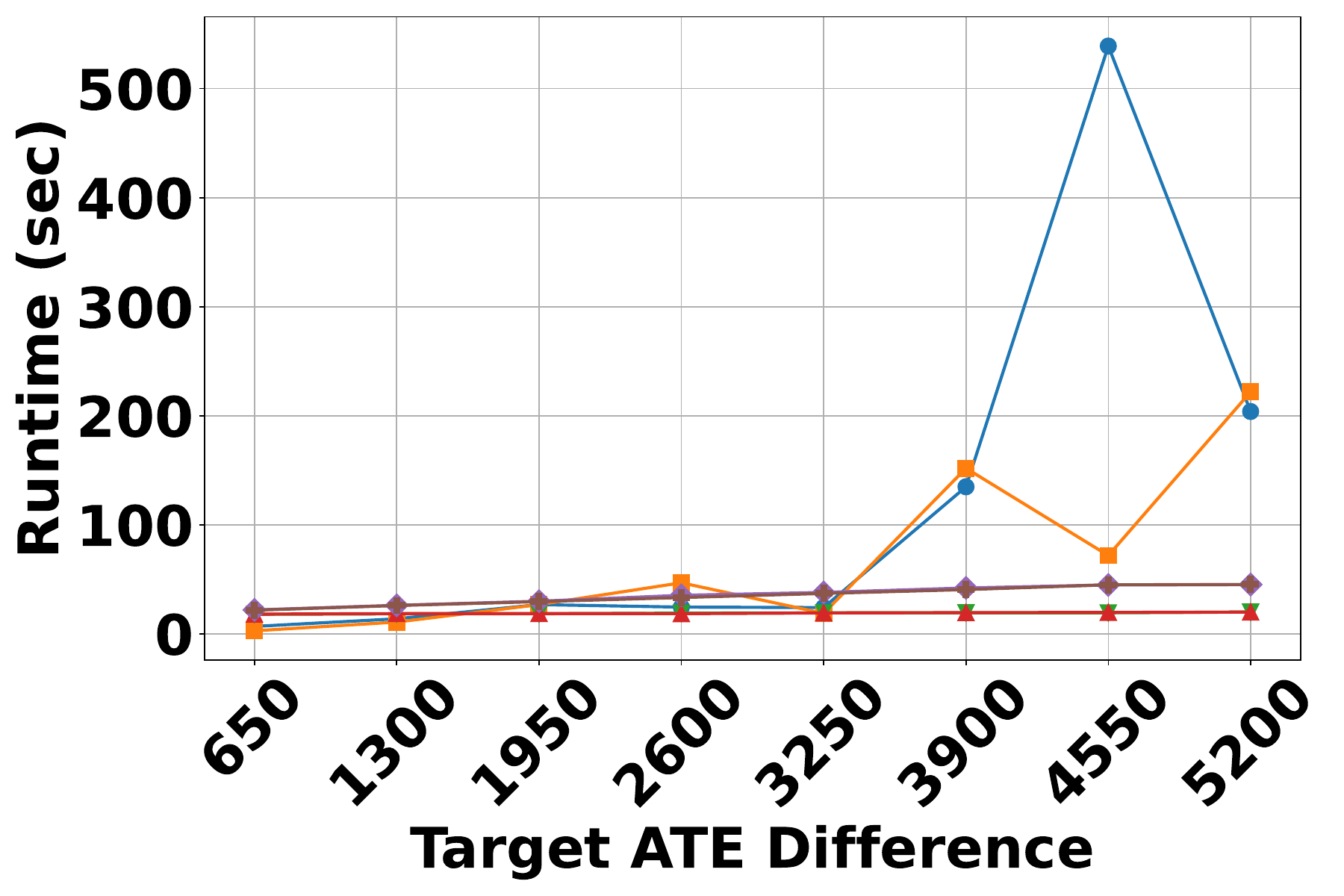}
        \caption*{(b) Stack Overflow}
        \label{fig:image2}
    \end{minipage}
    \begin{minipage}[t]{0.25\textwidth}
        \centering
        \includegraphics[height=2.4cm]{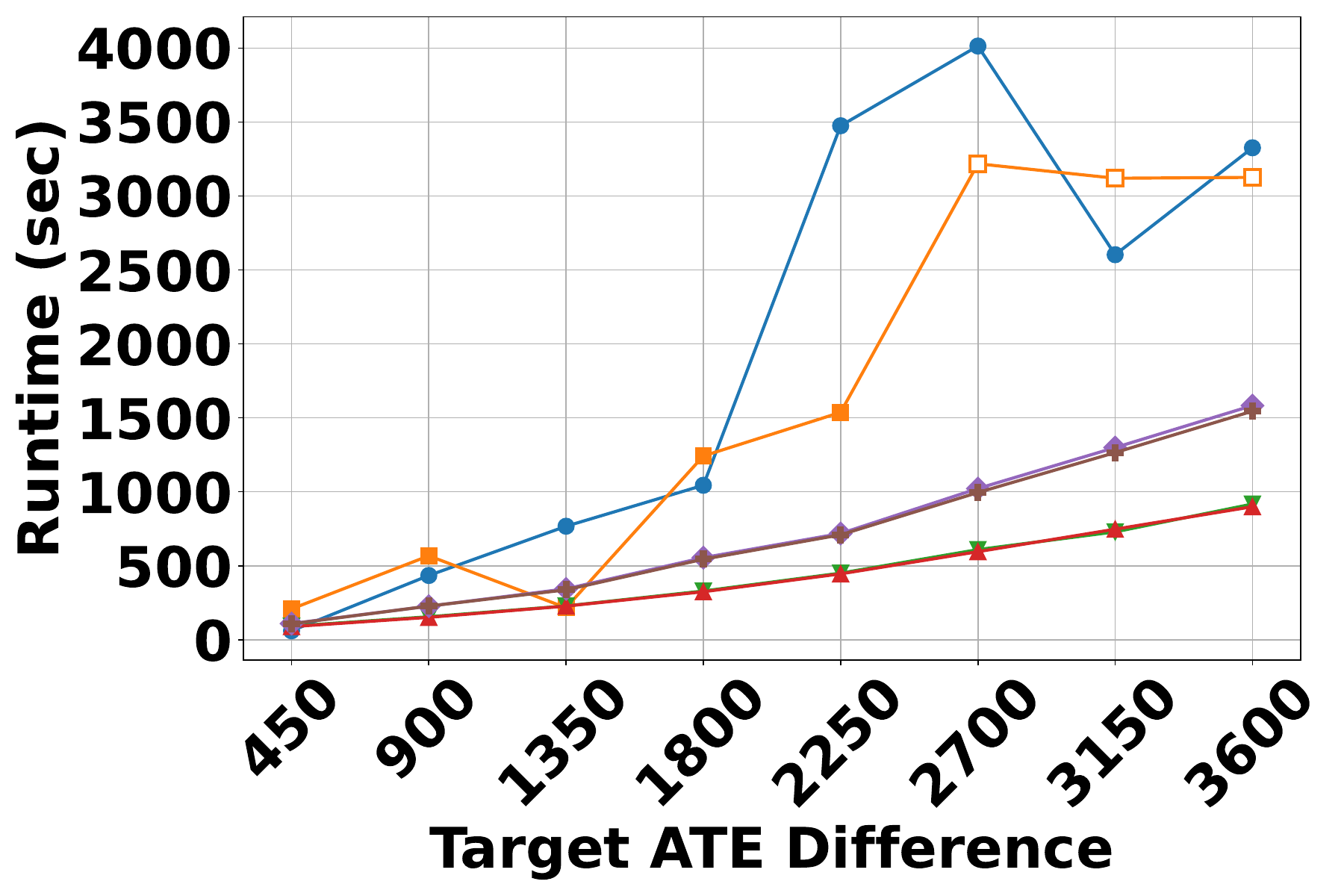}
        \caption*{(c) ACS}
        \label{fig:image3}
    \end{minipage}
    \caption{Runtime as a function of the distance between the initial and target ATE.}
    \label{fig:runtime_ate}
\end{figure*}

\subsection{Efficiency Evaluation}
\label{subsec:exp_efficiency}
Next, we evaluate how parameters affect algorithm runtime. Each experiment is repeated three times to account for randomization, and we report the average. We vary one parameter at a time (data size, confounders, or target ATE) while keeping others fixed. Results for the Twins dataset are omitted due to space, but similar trends were observed.

\vspace{1mm}
\noindent
\underline{Runtime vs. number of tuples}: We analyze how the number of tuples in the dataset impacts runtime. To do so, we randomly sample data subsets of increasing size. The results are shown in Figure~\ref{fig:runtime_tuples}. The runtime of the tuple-based algorithms (\algoNameTuple\ and \topk) increases approximately linearly with the number of tuples, as more influence scores need to be computed.
For \algoNamePattern, the runtime is less sensitive to data size. After processing around 40\% of the data, most patterns have already been considered. The spikes are due to the randomized nature of the algorithm: in some runs, it quickly finds a solution, while in others, it takes longer. Also, as discussed in Section \ref{subsec:ex_quality}, when the size of the subset to be removed is relatively big (as in ACS and Stack Overflow), the approximate version is less accurate and thus it takes longer to find a subgroup satisfying the condition.


\vspace{1mm}
\noindent
\underline{Runtime vs. target ATE}: We examine how the distance between the initial ATE and the target ATE affects runtime. To this end, we gradually increase the ATE shift. The results are presented in Figure~\ref{fig:runtime_ate}. Empty markers for \algoNamePattern\ indicate that no solution was found. As the gap between the initial and target ATE increases, the runtime of all algorithms grows accordingly. This is because a larger number of tuples must be removed to reach the target, leading to more iterations for \algoNameTuple\ and \topk, and forcing \algoNamePattern\ to explore larger subpopulations.

\vspace{1mm}
\noindent
\underline{Runtime vs. number of confounders}: We examine how the number of confounders affects runtime by randomly selecting confounder subsets. As more variables are included in the ATE estimation, runtimes increase across all algorithms. Full details appear in the Appendix.

\subsection{Ablation Study}
\label{subsec:exp_ablation}
We evaluate the effect of our ATE incremental update optimizations (Section~\ref{subsec:liner_unlearning}) by comparing against versions of our algorithms that recompute the ATE from scratch. We focus on the smaller German Credit and Twins datasets, as the unoptimized algorithms exceeded the time limit on larger datasets.
The unoptimized algorithms produced results nearly identical to their optimized counterparts. For example, on German Credit, the unoptimized \algoNameTuple\ (exact) selected the same 42 tuples but ran 18$\times$ slower (51s vs. 2.8s). Similarly, \algoNamePattern\ (exact) removed a slightly different subpopulation (132 vs. 126 tuples), reaching a nearly identical ATE (0.064 vs. 0.062), but with 2.3$\times$ longer runtime (118s vs. 50s).
As shown in Figure~\ref{fig:ablation}, runtime differences grow with data size. For \algoNameTuple, the optimizations yield up to a 100$\times$ speedup. For \algoNamePattern, the gain is around 10$\times$ on German Credit, where small subpopulations are removed (up to 200 tuples). On Twins, where larger groups are considered (of size up to thousands of tuples), the optimizations are less effective and can even slow performance. This shows that incremental update optimizations preserve solution quality while improving runtime, especially when only a small number of tuples are removed. When larger portions of the data are removed, the overhead can outweigh the benefits.

\begin{figure}[htbp]
    \centering
    \begin{minipage}{0.23\textwidth}
        \centering
        \includegraphics[width=\linewidth]{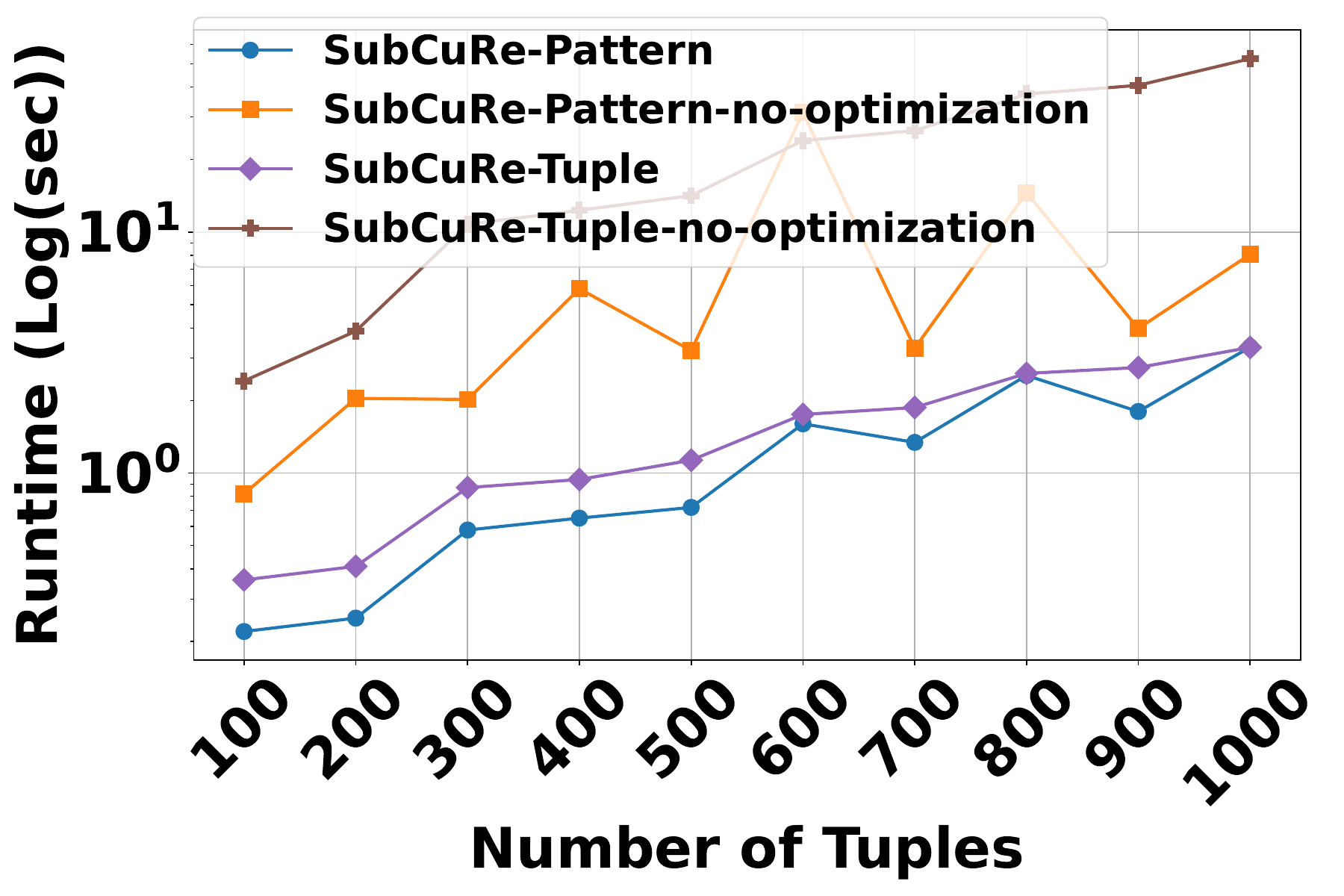}
        \caption*{(a) German Credit}
        \label{fig:image1}
    \end{minipage}
        \begin{minipage}{0.23\textwidth}
        \centering
\includegraphics[width=\linewidth]{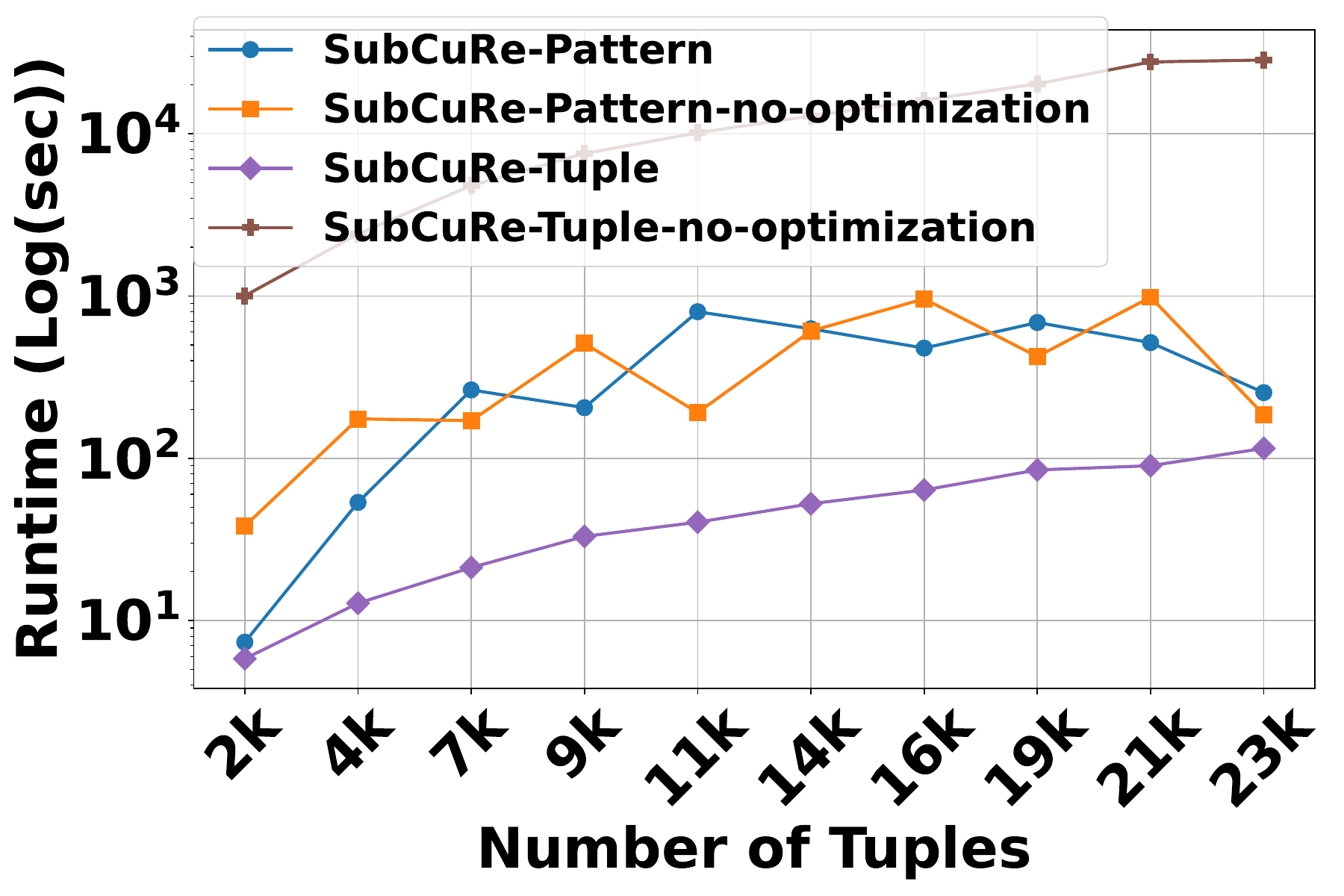}
        \caption*{(b) Twins}
        \label{fig:image1}
    \end{minipage}
\caption{Runtime as a function of number of tuples - with and without ATE incremental update optimizations}
\label{fig:ablation}
\end{figure}

\begin{figure}[htbp]
    \centering
    \begin{minipage}{0.48\textwidth}
        \centering
        \includegraphics[width=\linewidth]{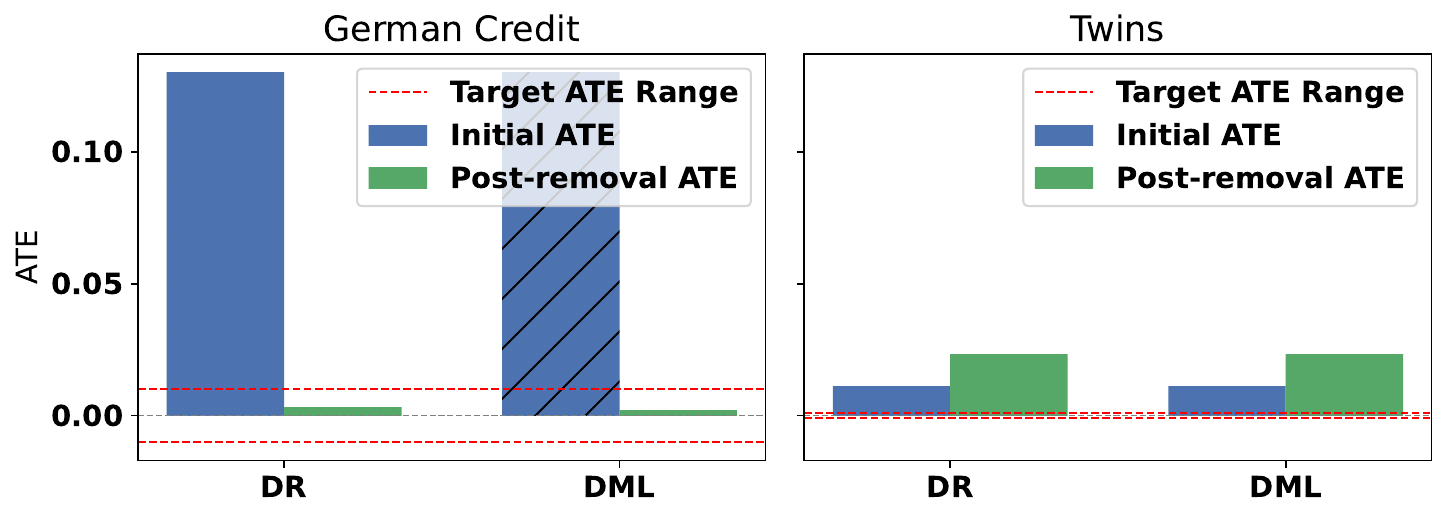}
     
    \end{minipage}
       
\caption{ATE estimates before and after removing tuples identified by \algoNameTuple\ (using linear regression update) from the German Credit and Twins datasets under two ATE estimators: DR and DML.}
\label{fig:generazability}
\end{figure}



\subsection{Generalizability of \sysName}
\label{subsec:ate_estimators}
Next, we show that \algoNameTuple\ consistently identifies influential tuples that affect the ATE, even when alternative estimation methods are used. Specifically, we evaluate its effectiveness with two widely used estimators: Doubly Robust (DR)~\cite{bang2005doubly} and Double Machine Learning (DML)~\cite{chernozhukov2016double}.
We conduct experiments on the German Credit and Twins datasets. For each scenario, we first compute the initial ATE using the DR and DML estimators\footnote{using the implementation of EconML~\cite{econml2019}.}. Then, we apply \algoNameTuple\ (with linear regression or IPW estimators) to identify influential tuples. After removing the selected tuples, we recompute the ATE using the same estimator (DR or DML).
Our results demonstrate that \algoNameTuple\ reliably finds tuples whose removal shifts the ATE in the desired direction, across both DR and DML. The results while using linear regression with the exact update optimization are shown in Figure~\ref{fig:generazability} 
In German Credit, the initial ATE is 
0.13 using all estimators. After removing tuples selected by \algoNameTuple - using either linear regression or IPW - the ATE, when recomputed with DR and DML, moves into the target range of $(0 \pm 0.01)$.
In Twins, the initial ATE estimated by DR and DML is positive (0.011), while the ATE estimated by linear regression and IPW is negative (-0.016). Consequently, \algoNameTuple\ selects tuples whose removal causes an increase in ATE. Although the post-removal ATE computed with DR and DML does not reach the target range (0 $\pm$0.001), the shift in ATE is in the correct direction, consistent with the effect predicted by \algoNameTuple.
\emph{These results highlight the generalizability of \algoNameTuple: its selected tuples influence ATE estimates consistently, even when evaluated using different estimation strategies}.

\subsection{\reva{Relation to Sensitivity Analysis}}
\label{exp:sensitivity_analysis}

\reva{We demonstrate that \sysName\ provides a complementary perspective on ATE sensitivity compared to existing methods. We focus on two closely related approaches (discussed in detail in Section \ref{sec:related}): (i) KonFound \cite{frank2013would}, which quantifies the degree of hidden bias required to overturn an inference, and (ii)
WTE \cite{jeong2020robust}, which measures robustness under confounder shift by estimating the worst-case ATE across subpopulations of a given size. As a case study, we evaluate both methods on relevant datasets, report their results, and highlight their differences from \algoNameTuple{}. We exclude \algoNamePattern{}, as both compared methods operate at the individual tuple level. }

\noindent
\underline{Konfound}:  \reva{KonFound evaluates sensitivity to unobserved confounders by identifying the “switch point” at which an effect becomes null, akin to pushing the ATE toward 0 (effectively nullifying the causal effect). Thus, we only evaluate this method on the German Credit and Twins datasets, where the goal is to push the ATE towards 0. In the German dataset, Konfound estimated that over 50\% of the data would need to be replaced to nullify the effect, while in Twins the estimate was 48\%. In contrast, \algoNameTuple\ identified just 4.2\% of the data in German and 1.1\% in Twins for removal to achieve no effect. This indicates that \emph{while results are generally robust to unobserved confounders, they can be highly sensitive to specific observed data points}, highlighting the complementary perspective \algoNameTuple\ provides for sensitivity analysis. }

\smallskip
\noindent
\underline{WTE}
\reva{For WTE, we set the subpopulation size threshold to match the number of tuples removed by \algoNameTuple\ in each dataset. Across all datasets, WTE shows minimal change in ATE, indicating strong robustness to confounder distribution shifts. In the German Credit dataset, \algoNameTuple{} reduces the ATE by ~95\%, whereas WTE changes it by only ~3\%. In Twins, \algoNameTuple{} reduces the ATE magnitude by ~94\%, while WTE shifts it by less than 1\%. On Stack Overflow, \algoNameTuple{} decreases the ATE by ~38\%, whereas WTE changes it by only ~7\%. Similarly, on the ACS dataset, \algoNameTuple{} increases the ATE by ~31\%, while WTE changes it by only ~2\%.
These results show that \emph{while the ATE may be largely stable under shifts in confounder distribution, it can be highly sensitive to the removal of specific tuples}, demonstrating a complementary notion of robustness captured by \algoNameTuple{}. }



 \section{Conclusion \& Future Work}
\label{sec:conc}
We presented \sysName, a framework for auditing the robustness of causal conclusions via cardinality repair. By identifying minimal subpopulations whose removal brings the ATE within a target range, \sysName\ offers a quantitative sensitivity measure and interpretable insights into influential data regions.


\sysName\ assumes that a sufficient set of confounders is provided by the user. In practice, however, errors in confounder selection are a major source of bias in causal analysis~\cite{youngmann2023explaining,mcnamee2003confounding,diaz2013sensitivity}. Additionally, the current pattern repair algorithm is limited to conjunctions of equality predicates, which may be too coarse for datasets with continuous or ordinal attributes, potentially missing more expressive or meaningful subpopulations. Finally, our implementation currently supports a single-relation schema, and does not yet extend to multi-table relational databases. 
\revc{ Previous work, such as~\cite {SalimiPKGRS20}, has extended causal models to handle multi-table data.
Extending our approach to multi-table relational data is not straightforward, as it requires careful handling of how treatments, outcomes, and confounders are distributed across tables, how interventions propagate through joins, and how relational dependencies affect influence measures. Developing such an extension is non-trivial and represents an important direction for future work. Notably, prior work leveraging causal inference in data management ~\cite{abs-2207-12718,salimi2018bias,youngmann2024summarized} has also focused on single-table settings.}
We leave all these extensions to future work.

Several other promising directions remain for future research. A natural extension is to handle multiple causal effect estimations simultaneously. Additionally, expanding the framework to support a wider variety of intervention models beyond tuple deletion, such as tuple updates and insertions, would increase its applicability to real-world data scenarios where data can be modified in diverse ways. 
\revb{We also plan to explore directions that could deepen the theoretical understanding of our problem. One interesting direction is to adopt a probabilistic perspective, treating the database as a sample from an underlying distribution (similar to~\cite{jeong2020robust}). This could enable reasoning about the effect of removing tuples without explicitly recomputing the ATE, for instance by using probabilistic bounds to estimate the impact of tuple removals. Such an approach might yield theoretical guarantees on the expected change in ATE.  
Another promising avenue is to explore approximate methods that provide (weaker) guarantees, such as using over- and under-approximations of the ATE obtained by sorting tuples based on outcome values and applying greedy selection strategies. 
Finally, studying the parameterized complexity of our problem, for example, with respect to output size, is another promising direction that could reveal fundamental limitations of our approach. We leave these investigations for future work.
}




\bibliographystyle{ACM-Reference-Format}
\bibliography{vldb_sample.bib}

 \appendix

\section*{Appendix}
We provide missing proofs, details, and additional experiments.

\section{Proofs}
In this part, we provide formal proofs of the propositions stated in Section \ref{sec:problem}.

\subsubsection{Proof of Proposition \ref{prop:problem_is_np_hard}}

\begin{proof}[Proof of Proposition \ref{prop:problem_is_np_hard}]
We show a reduction from SUBSET-SUM to the decision version of \probName. 
Recall that in SUBSET-SUM we are given a set of numbers $S = \{x_1, \ldots, x_n\}\subseteq \bbN$ and a target $k$, and the problem is to decide whether there exists a subset $S' \subseteq S$ such that $\sum_{x\in S'}x = k$. 

Recall that in \probName\ we are given a database instance $\db$ over a schema $\attrset$, a binary treatment variable $T \in \attrset$, an outcome variable $O \in \attrset$, a desired value $ATE_d$ and $\epsilon>0$. In the decision version of this problem we are also given an additional bound $n$, and we need to decide whether there is a subset of tuples $\Gamma \subseteq \db$ s.t. $|\Gamma|\le n$ and it holds that
$ATE_{\db \setminus \Gamma}(T,O) \in [ATE_d - \epsilon, ATE_d + \epsilon]$.

We turn to describe the reduction. See \cref{xmp:reduction subset sum} for an example.
Given a SUBSET-SUM instance $(S, k)$, we define an instance of \probName\ as follows: 
The schema $\attrset$ is defined as $\attrset = \{T,O\}$ (i.e., there are no confounding variables). 
For each element $x \in S$, we add to $\db$ two corresponding tuples $t_x^1$ and $t_x^2$ with the following values:
\begin{itemize}
    \item $t_x^1[T] = 1, t_x^1[O] = x$
     \item $t_x^2[T] = 0, t_x^2[O] = 0$
\end{itemize}
In addition, we add to $\db$ the tuple $t$ where $t[T] = 1$ and $t[O] = - k$.
We set the target ATE to $ATE_d=0$ and set $\epsilon = 0$, i.e., we must reach exactly the target ATE of $0$. Finally, we bound on the allowed number of tuples to delete to be at most $|S|$. 
Clearly the reduction can be implemented in polynomial time. We turn to show correctness.

($\Rightarrow$) Assume the SUBSET-SUM instance has a solution $S'\subseteq S$. We remove from $\db_d$ all tuples $t_x^1$ for $x\notin S'$ (intuitively, we retain only the $t_x^1$ tuples that are in $S'$). Note that since $|S'|\le |S|$, we are within the allowed number of deleted tuples. We claim that the ATE after this removal is $0$.
Indeed, we have $AVG(O|T= 0) = 0$ (since all the tuples with $T=0$ have $O=0$), and since $\sum_{x\in S'}x=k$, then $AVG(O|T=1)=(-k+\sum_{x\in S'}x)/(|S'|+1)=(-k+k)/(|S'|+1)=0$ (accounting for the remaining tuples with $T=1$). It follows that the resulting ATE is $0$.

($\Leftarrow$) Assume that the instance of our problem is solvable, with a deleted set of tuples $\Gamma$. 
Intuitively, we do not care about deleted tuples where $t[T] = 0$, as these tuples also have $t[O]=0$ and do no affect the ATE. 
Denote by $\Gamma'$ the set of deleted tuples with $t[T]=1$. Since the new ATE is $0$, and since the contribution of $t[T]=0$ tuples is $0$, it follows that after deleting the $\Gamma'$, the average of the tuples with $t[T]=1$ is $0$, and therefore also the sum.

Since $S$ contains only non-negative numbers, it follows that the tuple $t$ where $t[O] = -k$ is not deleted (otherwise the sum would be strictly positive). Thus, all tuples in $\Gamma'$ are of the form $t[T] = 1$ and $t[O] >0$. Denote by $S'=\{x\in S\mid t_x\notin \Gamma'\}$ the remaining tuples, it follows that $-k+\sum_{x\in S'}x=0$, so $\sum_{x\in S'}x=k$, and therefore the SUBSET-SUM instance is solvable.

\end{proof}

    \begin{table}[H]
\caption{The database $\db$ created by the reduction in \cref{xmp:reduction subset sum}. The tuple naming convention is as per the reduction in \cref{prop:problem_is_np_hard}.}
    \label{tab:instance_for_subsetsum}
    \begin{tabular}{|c|c|c|}
      \hline
      &T & O \\
      \hline
   $t_1^1$  & 1 & 1 \\
     $t_3^1$& 1 & 3 \\
     $t_5^1$& 1 & 5 \\
      $t$& 1 & -4 \\
     $t_1^2$& 0 & 0 \\
     $t_3^2$& 0 & 0 \\
     $t_5^2$& 0 & 0 \\
      \hline
    \end{tabular}

 

  \end{table}

  \begin{example}
  \label{xmp:reduction subset sum}
We illustrate the reduction in \cref{prop:problem_is_np_hard}.
Consider the following instance of SUBSET-SUM: $S = \{1,3,5\}$ and $k = 4$. We construct an instance of the cardinality repair for causal effect targeting problem as depicted in \cref{tab:instance_for_subsetsum}, and set $\epsilon = 0$. 
Observe that $ATE(T,O)=\frac14(1+3+5-4)-\frac13 0=\frac54$.





A solution to the SUBSET-SUM instance is the set $S' = \{1,3\}$ since their sum is equal to $4$. Therefore, by deleting from $\db$ the tuples $t_5^1$ (namely keeping $t_1^1$ and $t_3^1$), the new ATE is $\frac13(1+3-4)-\frac13 0=0$, as required.
\end{example}

\subsubsection{Proof of \cref{prop:problem_is_np_hard_pattern}}

We now turn to prove \cref{prop:problem_is_np_hard_pattern}, showing that \cref{prob:patterns} is NP-hard. Again, the decision version of the problem includes a parameter $n$ that bounds the size of the removed population.
Our reduction is from the setting of tuple deletion, namely \probName. 

Consider therefore an instance of the cardinality repair for causal effect targeting problem, i.e., a database $\db$ over schema $\attrset$, variables $T,O$, a desired value $ATE_d$, $\epsilon>0$ and a bound $n$. We can assume without loss of generality that there are no confounding variables, as the problem remains NP-hard in this case, as we show in the proof of \cref{prop:problem_is_np_hard} (this assumption only simplifies the writing, but has no meaningful impact on the construction).

Intuitively, the reduction introduces new attributes, in such a way that every subset of tuples can be defined by a pattern, thus showing an equivalence between tuple deletion and pattern deletion.




We start with the explicit construction.

Denote by $m$ the number of tuples in $\db$. We obtain a new database $\db_2$ by adding to $\db$ fresh attributes $S_1,\ldots, S_m$ (where $S_i$ stands for ``Select $i$''), and by modifying the tuples such that for tuple $t_i$ the values of the new attributes are $t_i(S_j)=0$ if $i\neq j$ and $t_i(S_i)=1$.
We refer to $\db_2$ as the \emph{identifier augmentation} of $\db$. We demonstrate the construction in \cref{tab:identifier augmentation example}. 
\begin{table}[ht]
    \centering
    \caption{A database $\db$ and its identifier augmentation $\db_2$.}
    \label{tab:identifier augmentation example}
    \begin{minipage}[b]{0.2\textwidth}
        \centering
        \begin{tabular}{|c|c|c|}
            \hline
            &T & O \\
            \hline
            $t_1$&1    & 10    \\
            $t_2$&1    & 8    \\
            $t_3$&1    & 6    \\
            $t_4$&1    & 3    \\
            \hline
        \end{tabular}
        \caption{$\db$}
    \end{minipage}
    \begin{minipage}[b]{0.25\textwidth}
        \centering
        \begin{tabular}{|c|c|c|c|c|c|c|}
            \hline
            & $S_1$ & $S_2$ & $S_3$ & $S_4$ &T & O \\
            \hline
           $t_1$&1&0&0&0&0    & 12    \\
           $t_2$&0&1&0&0&0    & 9    \\
           $t_3$&0&0&1&0&0    & 1    \\
           $t_4$&0&0&0&1&0    & 1    \\
            \hline
        \end{tabular}
        \caption{$\db_2$}
    \end{minipage}
\end{table}

\begin{lemma}
    \label{lem:identifier augmentation property}
    Consider a database $\db$ and its identifier augmentation $\db_2$. For every set of tuples $I\subseteq \{1,\ldots, n\}$, there exists a pattern $\pattern_I$ such that $\sat(\pattern_I)=I$.
\end{lemma}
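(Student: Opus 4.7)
The plan is to exhibit an explicit pattern $\pattern_I$ for each $I\subseteq\{1,\ldots,n\}$ that selects exactly the tuples indexed by $I$. The central observation is that the identifier augmentation makes each attribute $S_i$ a one-hot indicator for tuple $t_i$: by construction $t_i(S_i)=1$ while $t_j(S_i)=0$ for every $j\neq i$. Although a conjunctive equality pattern cannot directly demand membership in a set of indices (conjunction behaves as intersection, not union), it \emph{can} exclude indices by forcing the corresponding indicators to be zero simultaneously.

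The definition I would use is
\[
\pattern_I \;=\; \bigwedge_{j\in\{1,\ldots,n\}\setminus I}\bigl(S_j=0\bigr).
\]
This is a valid conjunctive equality pattern over the schema of $\db_2$, since it mentions only the fresh attributes and not $T$ or $O$. Next I would verify the two inclusions. For $\supseteq$: if $i\in I$, then every conjunct $S_j=0$ in $\pattern_I$ has $j\neq i$, and by construction $t_i(S_j)=0$, so $t_i$ satisfies every conjunct and hence $\pattern_I$. For $\subseteq$: if $i\notin I$, then the conjunct $S_i=0$ itself appears in $\pattern_I$, but $t_i(S_i)=1$, so $t_i$ violates $\pattern_I$. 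Together these give $\sat(\pattern_I)=I$.

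The only subtleties concern the two boundary cases. When $I=\{1,\ldots,n\}$ the conjunction is empty and should be read as the trivially true pattern, which is satisfied by every tuple of $\db_2$, matching $\sat(\pattern_I)=I$. When $I=\emptyset$ the conjunction forces $S_j=0$ for every $j$, and since every tuple of $\db_2$ has some $S_i=1$, no tuple satisfies it, giving $\sat(\pattern_I)=\emptyset$ as required. If the paper's formal definition insists on a nonempty conjunction, one can attach a harmless equality referring to an attribute value shared by all tuples in the intended selection (e.g., a cloned copy of one of the original attributes) without affecting the argument.

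I do not expect any real obstacle here, since the construction is essentially a one-hot encoding trick and the verification is elementary. The substance of the lemma is downstream: it establishes a bijection between subsets of tuples in $\db$ and pattern-induced subpopulations in $\db_2$, which is exactly what is needed to transport the hardness reduction of \cref{prop:problem_is_np_hard} from tuple deletion to the pattern-deletion setting of \cref{prob:patterns}. In particular, a deletion set of size $k$ in $\db$ corresponds to a pattern whose induced subpopulation in $\db_2$ has size $k$, and since the $S_j$ attributes are not involved in the causal query, the ATE over $(T,O)$ is preserved under the translation.
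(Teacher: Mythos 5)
Your construction $\pattern_I=\bigwedge_{j\notin I}(S_j=0)$ and the two-inclusion verification are exactly the paper's own proof of \cref{lem:identifier augmentation property}; the only addition is your (harmless and correct) discussion of the boundary cases $I=\{1,\ldots,n\}$ and $I=\emptyset$. The proposal is correct and takes essentially the same approach.
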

\begin{proof}
    Define the pattern $\pattern_I=\bigwedge_{i\notin I} S_i=0$. We claim that $\sat(\pattern_I)=I$. Indeed, for $i\in I$ we have that $t_i(S_j)=0$ for all $j\neq i$, and in particular for all $j\notin I$. Therefore $i\in \sat(\pattern_I)$.

    Conversely, let $i\in \sat(\pattern_I)$, then $t_i(S_j)=0$ for all $j\notin I$. Since $t_i(S_i)=1$, it follows that $i\in I$.
\end{proof}
As an example of \cref{lem:identifier augmentation property}, consider the set $\{t_2,t_3\}$ in \cref{tab:identifier augmentation example}. The pattern $S_1=0\wedge S_4=0$ exactly captures it.

\begin{remark}
    \label{rmk:disjunctive patterns}
    A similar construction can be obtained with \emph{disjunctive} patterns, by setting the pattern to capture all entries \emph{inside} the set $I$.
\end{remark}

We can now show the correctness of the reduction. 

$(\Rightarrow).$ If there exists a subset $\Gamma\subseteq \db$ such that $ATE_{\db\setminus \Gamma}(T,O)\in [ATE_d-\epsilon,ATE+\epsilon]$ and $|\Gamma|\le n$, consider the pattern $\pattern_\Gamma$ for $\db_2$ as per \cref{lem:identifier augmentation property}, then removing $\pattern_\Gamma$ results in the same database $\db\setminus \Gamma$, and therefore yields an ATE also within $\epsilon$ from $ATE_d$. 

$(\Leftarrow).$ If there is a pattern in $\db_2$ whose removal yields an ATE within $\epsilon$ from $ATE_d$, clearly we can also select the tuples in this pattern directly, without specifying them as a tuple, in the original database $\db$.

\begin{remark}
    \label{rmk:tuple deletion works with arbitrary function}
    The reduction above does not use any property of ATE. Indeed, the reduction of tuple removal to pattern removal remains correct regardless of the optimization function.
\end{remark}
In light of \cref{rmk:tuple deletion works with arbitrary function}, we have the following in particular.
\begin{theorem}
\label{thm:patter deletion NPc}
The pattern-deletion problem is NP-complete for AVG and ATE.
\end{theorem}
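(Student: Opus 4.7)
The plan is to compose two reductions that already appear in the paper. By Remark \ref{rmk:tuple deletion works with arbitrary function}, the identifier-augmentation reduction that lifts tuple deletion to pattern deletion is completely oblivious to the objective function: whatever hardness holds for the tuple-level repair problem transfers verbatim to the pattern-level one. Hence the task reduces to showing that the tuple-deletion problem is NP-hard once for each of AVG and ATE.

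For ATE, NP-hardness is exactly Proposition \ref{prop:problem_is_np_hard}, whose SUBSET-SUM reduction I would reuse unchanged. The crucial observation is structural: in the database produced by that reduction, every control tuple has outcome $0$, so $\mathbb{E}[O\mid T{=}0]=0$ and the ATE collapses to the average outcome over the surviving treated tuples. Consequently the same construction simultaneously witnesses NP-hardness for AVG: reaching AVG $=0$ on the treated subpopulation is equivalent to reaching ATE $=0$ in the original instance, and both encode the SUBSET-SUM question. If one insists that AVG be taken over the whole relation, a light padding tweak on the anchor tuple (rescaling its outcome by the known denominator) absorbs the constant contribution of the control rows without altering which subsets satisfy the arithmetic constraint.

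Chaining the SUBSET-SUM $\to$ tuple-deletion reduction (now valid for both AVG and ATE) with the tuple-to-pattern lifting established in the proof of Proposition \ref{prop:problem_is_np_hard_pattern} yields NP-hardness of the pattern-deletion problem for both objectives. Membership in NP is immediate and already noted elsewhere in the paper: a candidate pattern is a polynomial-sized certificate, and verification consists of materializing its satisfying tuples, deleting them, and computing a single AVG or ATE value, then testing whether it lies in $[f_d-\epsilon, f_d+\epsilon]$, all in polynomial time.

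The main obstacle I anticipate is the minor arithmetic bookkeeping in the AVG variant: one must ensure that switching from a sum-based target (as SUBSET-SUM naturally provides) to an average-based target, whose denominator depends on the set of surviving tuples, does not break the equivalence. This is handled by the standard observation that on a nonempty set, AVG $=0$ holds iff the corresponding sum is $0$; the anchor tuple with outcome $-k$ guarantees that the treated subset remains nonempty under any valid deletion, so the denominator is always positive and the reduction goes through cleanly.
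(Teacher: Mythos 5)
Your proposal is correct and follows essentially the same route as the paper: NP-hardness of tuple deletion via the SUBSET-SUM construction (where all control tuples have outcome $0$, so the ATE degenerates to an average and the same instance witnesses hardness for AVG), composed with the objective-agnostic identifier-augmentation lifting of Remark~\ref{rmk:tuple deletion works with arbitrary function}, plus the standard NP-membership certificate. Your extra care about the AVG denominator is fine but not even needed, since AVG $=0$ iff the sum is $0$ on any nonempty set and the control rows contribute $0$ to the sum.
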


\section{ATE Update for IPW}

Algorithm~\ref{algo:fisher-unlearning} depicts the Fisher mini-batch incremental update algorithm for logistic regression that we use to update the model parameters. 

\begin{algorithm}[t]
  \small
  \DontPrintSemicolon
  \SetKwInOut{Input}{Input}\SetKwInOut{Output}{Output}
  \LinesNumbered
  \Input{Current model parameters \(\theta\in\mathbb{R}^d\), original dataset \(\db\), subset to be removed \(\db_{\text{rmv}}\subseteq \db\), mini-batch size $m'$, loss function \(L(\theta;\db)\)}
  \Output{Unlearned parameters \(\theta_{\text{new}}\)}
  \BlankLine
  \SetKwFunction{PartitionBatches}{PartitionBatches}
  \SetKwFunction{EstimateGrad}{EstimateGrad}
  \SetKwFunction{EstimateHess}{EstimateHess}
  \SetKwFunction{SampleNoise}{SampleNoise}

 \tcc{Generating Mini-Batches}
  $s \leftarrow \left\lceil |D_{\text{rmv}}|/m' \right\rceil$\;
  Split $D_{\text{rmv}}$ into $\{D_{\text{rmv}}^1,\dots,D_{\text{rmv}}^s\}$\;
  \(D_{\text{new}}\!\gets\!D\setminus D_{\text{rmv}}\)\;
  \(\theta_{\text{new}}\!\gets\theta\)\;
  \For{$i \leftarrow 1$ \KwTo $s$}{
    \tcc{Compute gradient and empirical Fisher (Hessian)}
  \(\Delta\gets\)\(\nabla L(\theta_{\text{new}},D_{\text{new}}^i)\), \(F\;\gets\)\(\nabla^2 L(\theta_{\text{new}},D_{\text{new}}^i)\)\;
  \(\theta_{\text{new}}\!\gets\!\theta_{\text{new}} - F^{-1}\,\Delta\)\;
      \tcc{add calibrated noise for \(\sigma\)-certified unlearning}
  \If{\(\sigma>0\)}{
    \(b\gets\)\SampleNoise{\(d\)} \tcc{\(b\sim\mathcal{N}(0,I_d)\)}
    \(\theta_{\text{new}}\!\gets\!\theta_{\text{new}} + \sigma\,F^{-1/4}\,b\)\;
  }
  }
  \Return \(\theta_{\text{new}}\)\;
  \caption{Fisher Update for Logistic Regression}
  \label{algo:fisher-unlearning}
\end{algorithm}



\section{Additional Experiments}
\label{app:exp}


\subsubsection{Experiments with Synthetic Data under Known Deletion Budgets}\label{subsec:synthetic}

Running an exhaustive brute-force search algorithm to find the optimal solution was infeasible even on large datasets. To nonetheless evaluate how close \algoNameTuple\ comes to the optimal solution in large scale data, we designed experiments using synthetic data where an upper bound on the number of tuples to remove is known.
Specifically, we generated a synthetic dataset with 10k tuples, including a binary treatment variable, a continuous outcome, and three confounders. We set the target ATE to be the one computed over this data with $\epsilon = 0$. We then introduced an increasing number of noisy records into the data. In this setup, the number of inserted noisy tuples serves as an upper bound on the number of deletions required to reach the target ATE.

The results are shown in Figure \ref{fig:synthetic}. The approximate versions of \algoNameTuple\ and \topk\ are omitted as we observed similar trends. Up to 1300 noisy tuples, all algorithms found solutions smaller than the upper bound, indicating high accuracy. However, as the number of noisy tuples increases, implying that many tuples must be removed to reach the target ATE, the performance of \topk\ degrades. This inefficiency arises because \topk\ computes influence scores only once, which eventually becomes outdated. A similar trend was observed in the ACS and Twins use cases. 

\begin{figure}[ht]
  \centering
  \begin{minipage}[b]{0.25\textwidth}
    \includegraphics[width=\textwidth]{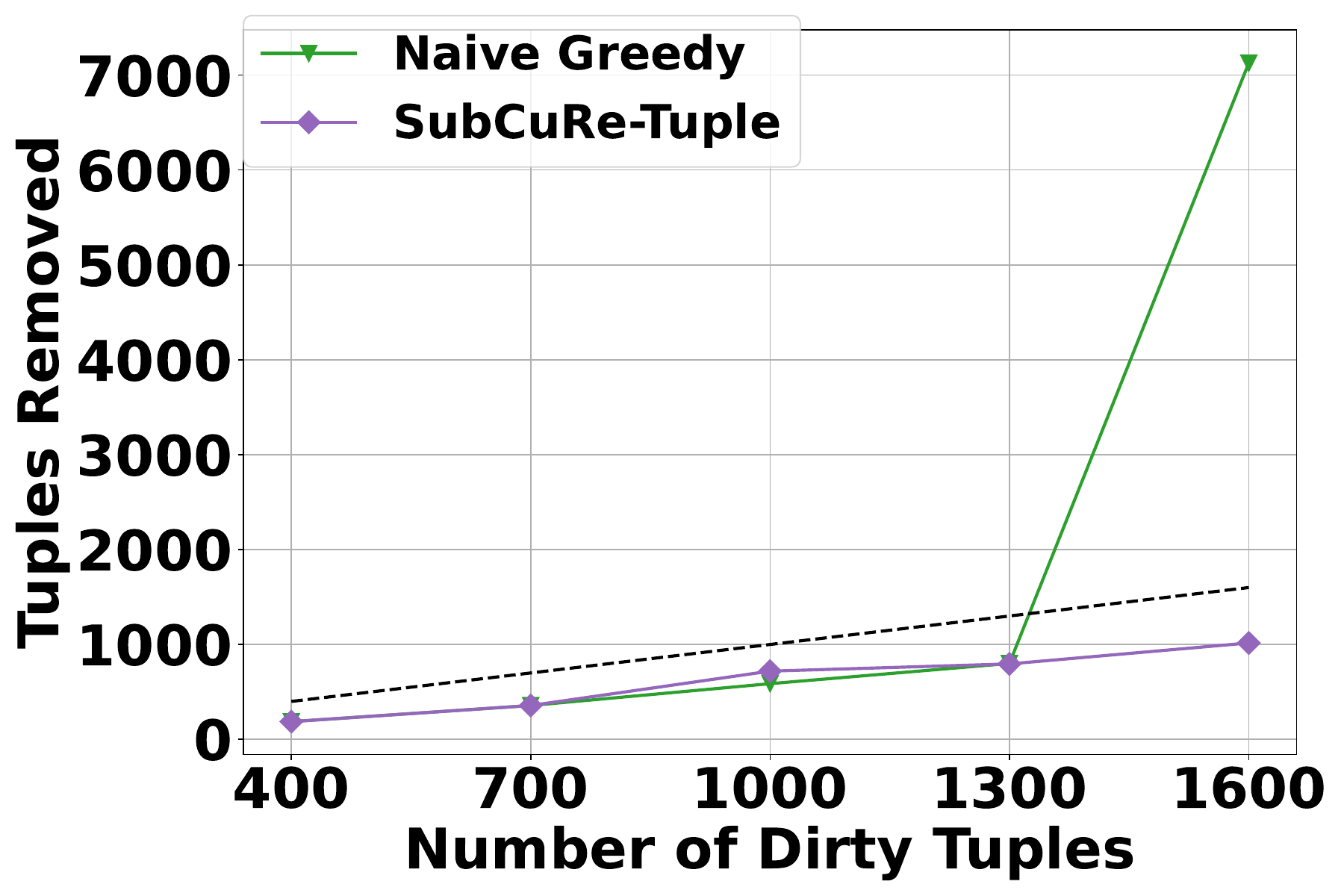}
  \end{minipage}
  \caption{Number of removed tuples vs. number of noisy tuples. An upper bound on the solution size is marked by a dashed black line.}
  \label{fig:synthetic}
\end{figure}

\begin{figure*}[t]
    \centering
    \begin{minipage}[t]{0.25\textwidth}
        \centering
        \includegraphics[height=2.4cm]{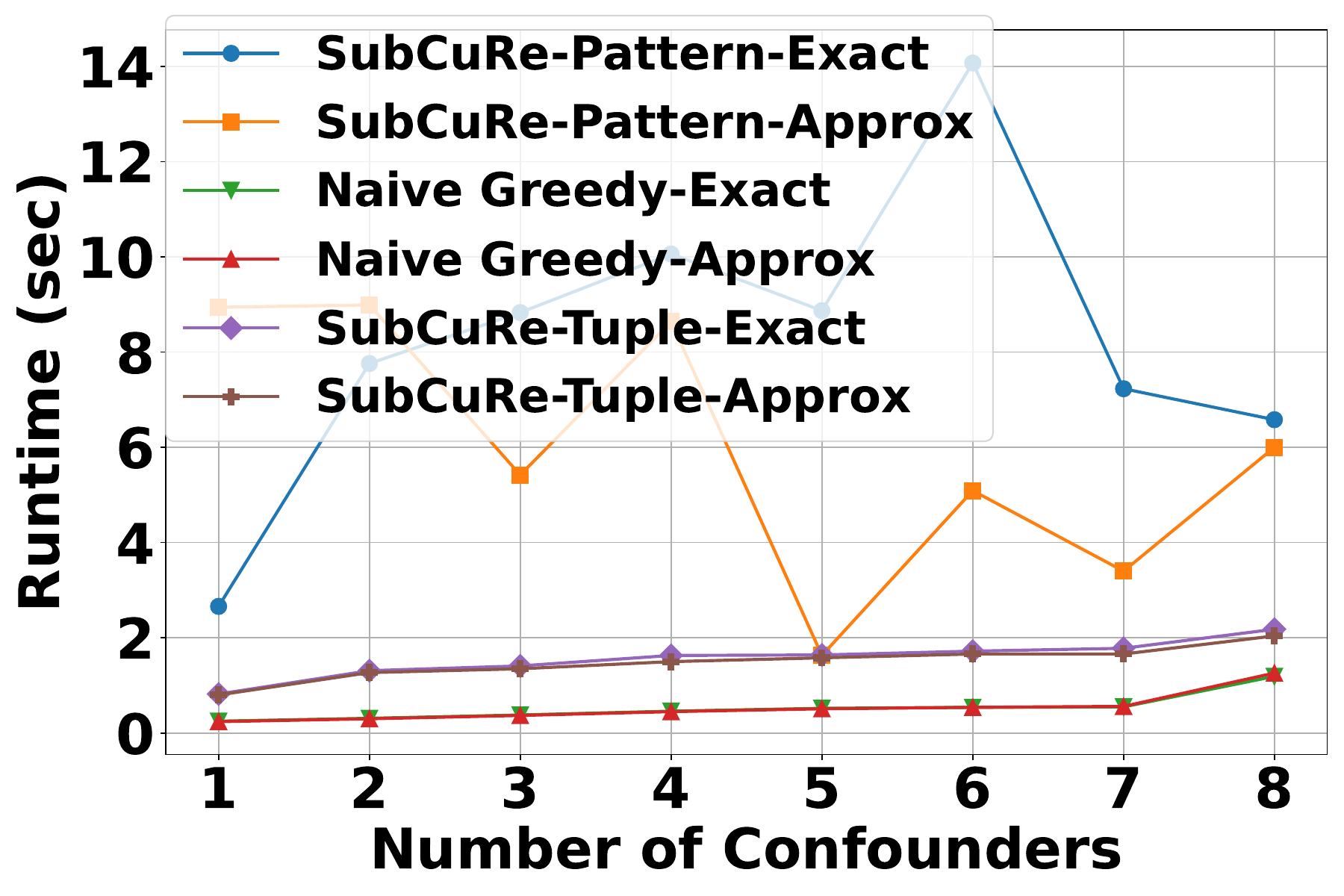}
        \caption*{(a) German Credit}
        \label{fig:image1}
    \end{minipage}
    \begin{minipage}[t]{0.25\textwidth}
        \centering
        \includegraphics[height=2.4cm]{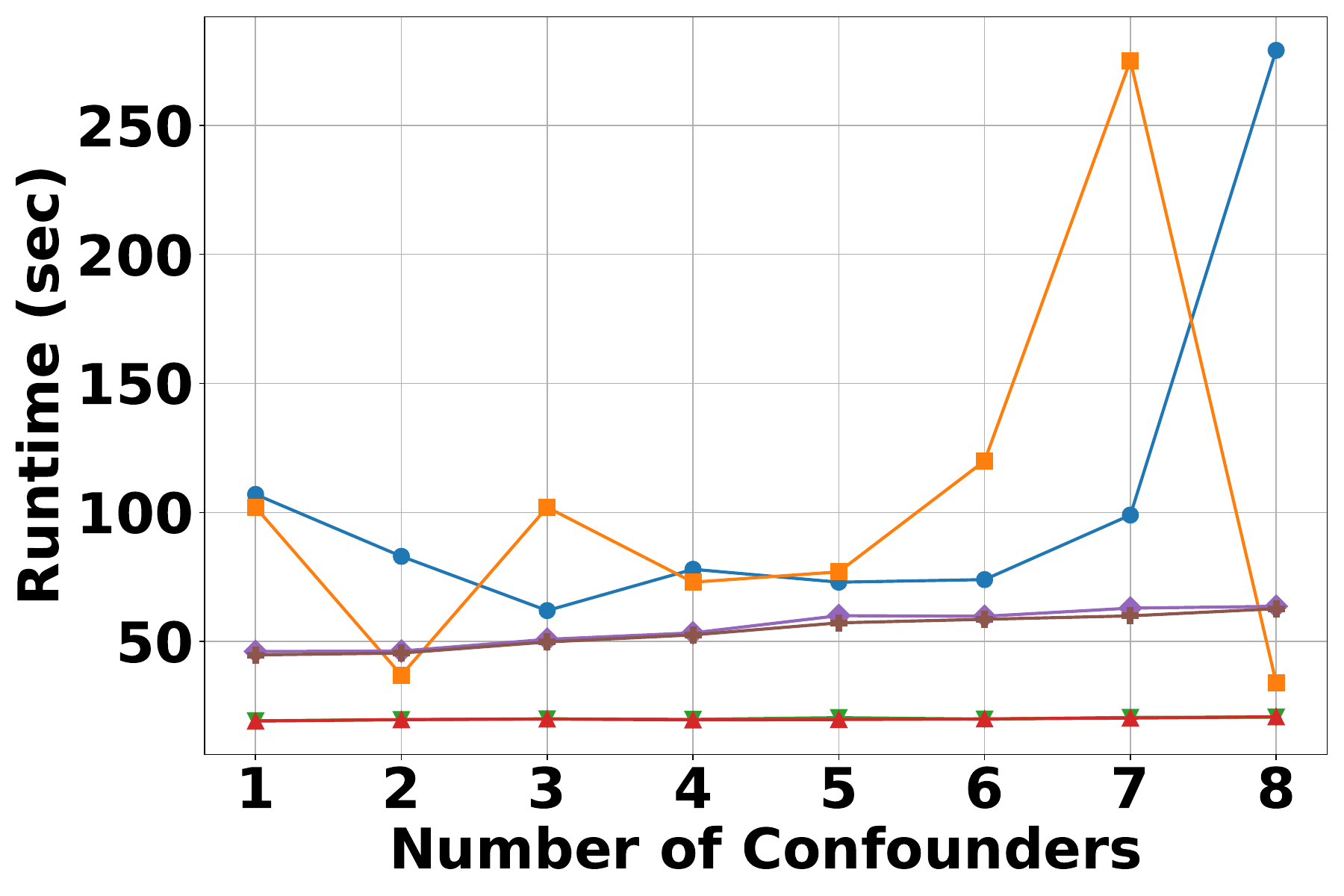}
        \caption*{(b) Stack Overflow}
        \label{fig:image2}
    \end{minipage}
    \begin{minipage}[t]{0.25\textwidth}
        \centering
        \includegraphics[height=2.4cm]{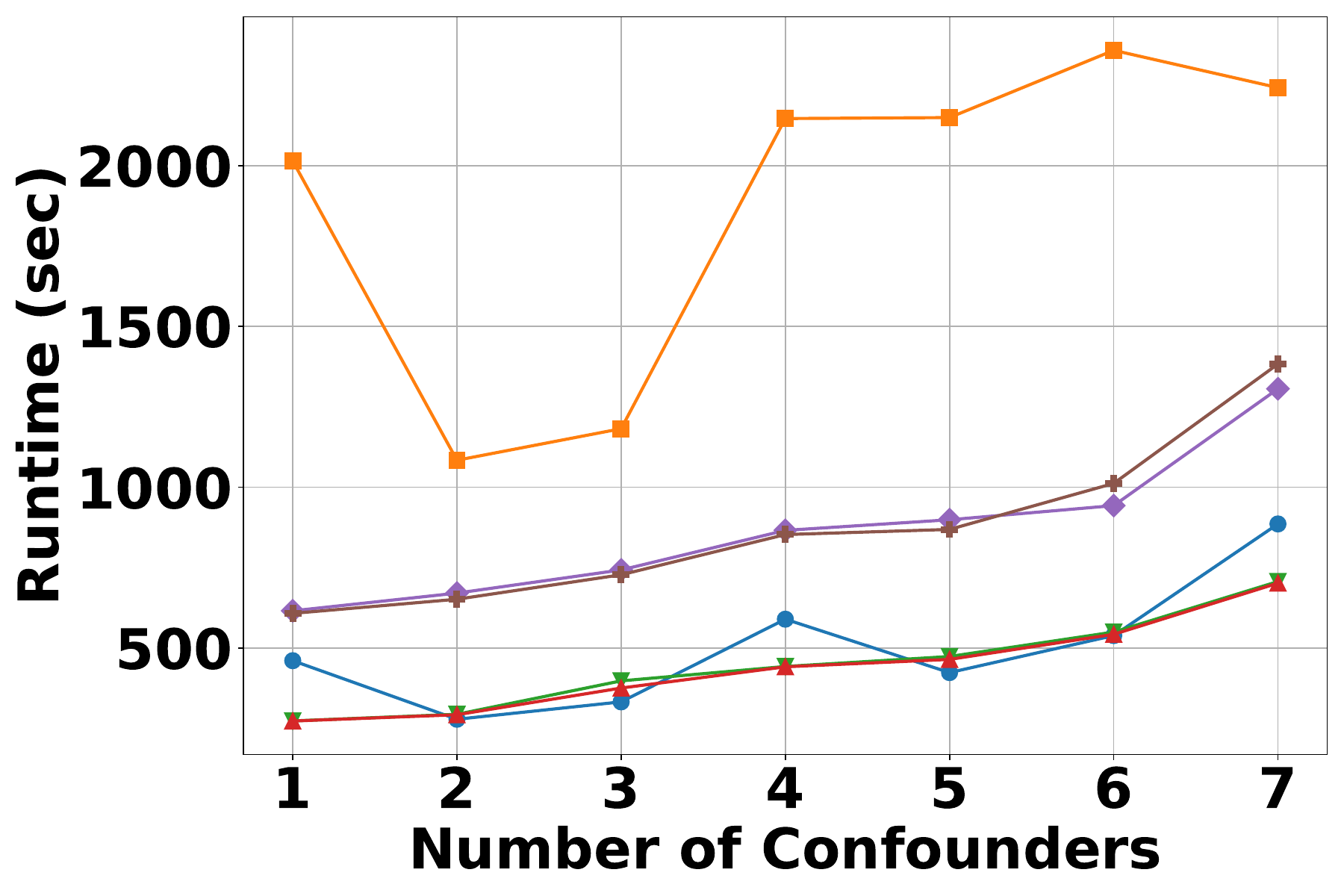}
        \caption*{(c) ACS}
        \label{fig:image3}
    \end{minipage}
    \caption{Runtime as a function of number of confounders.}
    \label{fig:runtime_confounders}
\end{figure*}

\subsubsection{Factors Influencing the Performance \algoNamePattern}
We evaluate how the number of random walks affects both runtime and the algorithm's ability to identify impactful, small subpopulations. Too few walks risk missing valid solutions, while too many increase runtime. Experiments on real-world datasets show that capping random walks at 1000 provides a good trade-off between quality and efficiency. A similar balance is achieved by terminating a walk when the size of the current group exceeds 20\% of the data. These parameters are user-configurable and can be adjusted to control runtime.

\paragraph*{Runtime vs. number of confounders}
The results are shown in Figure~\ref{fig:runtime_confounders} illustrates the effect of the number of confounders on runtime. With more confounders, each ATE computation takes longer, and thus, as expected, it affects the runtime of all methods.

\end{document}
\endinput